\DeclareMathOperator{\cost}{cost} \DeclareMathOperator{\dist}{dist}
\newcommand{\abs}[1]{\lvert #1 \rvert}
\newcommand{\norm}[1]{\lVert #1 \rVert}
\newtheorem{theorem}{Theorem}
\newtheorem{lemma}[theorem]{Lemma}
\newproof{proof}{Proof}
\journal{Computational Geometry: Theory and Applications}
\begin{document}

\begin{frontmatter}



\title{Integer Point Sets Minimizing Average Pairwise
  $L_1$~Distance:
 What is the Optimal Shape of a Town?}


\author[MIT]{Erik D. Demaine}
\author[TUBS]{S\'andor P. Fekete}
\author[FUB]{G\"{u}nter Rote}
\author[TUBS]{Nils Schweer}
\author[FUB]{\hspace{.45cm}Daria Schymura}
\author[GUF]{Mariano Zelke}

\address[MIT]{Computer Science and Artificial Intelligence Lab, MIT, USA.}
\address[TUBS]{Algorithms Group, Braunschweig University of Technology, Germany.}
\address[FUB]{Institut f\"ur Informatik, Freie Universit\" at Berlin, Germany.}
\address[GUF]{Institut f\"ur Informatik, Goethe-Universit\"at, Frankfurt am Main, Germany.}

\begin{abstract}
An {\em $n$-town}, $n\in \mathbb{N}$, is a group of $n$ buildings,
each occupying a distinct position on a 2-dimensional integer grid.
If we measure the distance between two buildings along the
axis-parallel street grid, then an $n$-town has optimal shape if the
sum of all pairwise Manhattan distances is minimized. This problem
has been studied for {\em cities}, i.e., the limiting case of very
large $n$. For cities, it is known that the optimal shape can be
described by a differential equation, for which no closed-form
solution
is known. We show that optimal $n$-towns
can be computed in $O(n^{7.5})$ time. This is also practically
useful,
as it allows us to compute optimal solutions up to $n=80$. 
\end{abstract}

\begin{keyword}
Manhattan distance \sep average pairwise distance \sep integer
points \sep dynamic programming
\end{keyword}

\end{frontmatter}



\section{Introduction}\label{sec:introduction}
Selecting an optimal set of locations is a fundamental problem, not
just in real estate, but also in many areas of computer science.
Typically, the task is to choose $n$ sites from a given set
of candidate locations; the objective is to pick a set that minimizes
a cost function, e.g., the average distance
between sites. As described below, there is a large variety of related results,
motivated by different scenarios.

In general, problems of this type are hard, even
to approximate, as the problem of finding a clique of given size is a special case. Some of the
natural settings have a strong geometric flavor, so it is conceivable
that more positive results can be achieved by exploiting additional
structures and properties.
However, even seemingly easy
special cases are still surprisingly difficult. Until now, there was no
complexity result (positive or negative) for the scenario in which
the candidate locations correspond to the full integer grid, with distances measured
by the Manhattan metric (an \emph{$n$-town}).
Indeed, for the shape of area~1 with minimum
average $L_1$ distance (the ``optimal shape of a city'', arising
for the limit case of $n$ approaching infinity), no simple
closed-form solution is known, suggesting that finding sets of $n$ distinct
grid points (the ``optimal shapes of towns'') may not be an easy task.
This makes the problem mathematically challenging; in addition,
the question of choosing $n$ grid positions with minimum
average $L_1$ distance comes up naturally in grid computing, so
the problem is of both practical and theoretical interest.

In this paper, we give the first positive result by describing
an $O(n^{7.5})$ algorithm for computing sets of $n$ distinct grid points
with minimum average $L_1$ distance. Our method is based on dynamic programming,
and (despite of its relatively large exponent) for the first time allows computing optimal
towns up to $n=80$.

\subsection{Related Work}
\paragraph{Grid Computing}
In grid computing, allocating a task requires selecting $n$ processors from
a given grid,
and the average communication overhead corresponds to
the average Manhattan distance between processors;
\cite{mache96,mache97} and 
\cite{leung02}
propose various metrics for
measuring the quality of a processor allocation, including the
average number of communication hops between processors.
\cite{leung02} considered the problem of allocating processors
 on Cplant, a Sandia National Labs
supercomputer;
they applied and evaluated a scheme
based on space-filling curves, and
 they concluded that the average pairwise Manhattan
distance between processors is an effective metric to optimize.

\paragraph{The Continuous Version}
Motivated by the problem of storing records in a 2-dimensional
array, 
\cite{KarpMW75} studied strategies that minimize
average access time between successive queries; among other results,
they described an optimal solution for the continuous version of our
problem: What shape of area $1$ minimizes the average Manhattan
distance between two interior points? Independently,
\cite{Bender03whatis} solved this problem in the setting of a city,
inspiring the subtitle of this paper. The optimal solution is
described by a differential equation, and no closed-form solution is
known.

\begin{figure}[tb]
\begin{center}
  \includegraphics[width=.73\columnwidth]{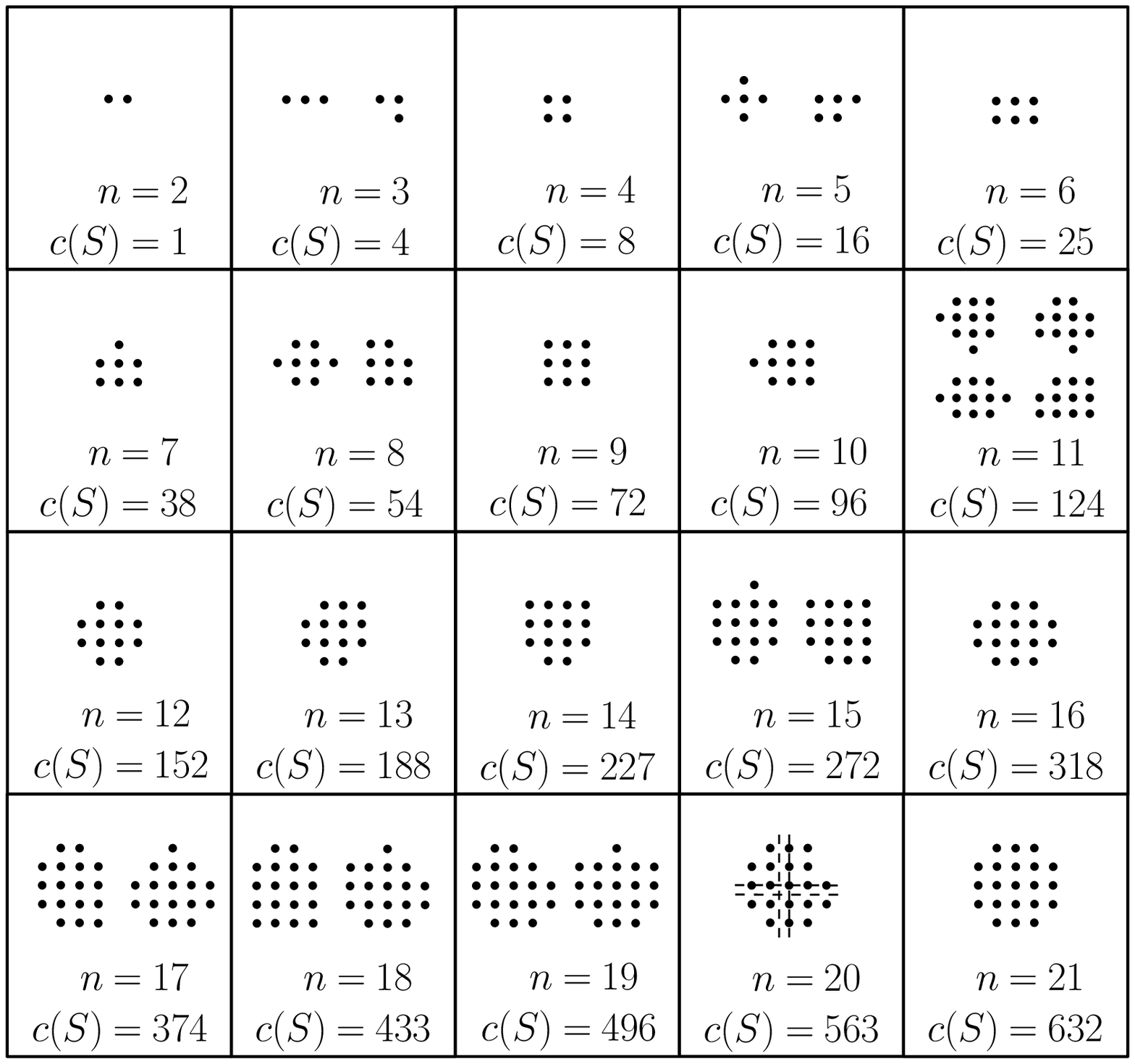}
  \caption{Optimal towns for $n=2,\ldots,21$.  All optimal solutions
    are shown, up to symmetries; the numbers indicate the total
    distance between all pairs of points. The picture for $n=20$ also
    contains the symmetry axes from Lemma~\ref{lem:symmetry}. }
\label{fig:smalltowns}
\end{center}
\end{figure}

\paragraph{Selecting $k$ points out of $n$}
\cite{krumke97} consider the discrete problem of selecting a subset
of $k$ nodes from a network with $n$ nodes to minimize their average
pairwise distance. They prove a $2$-approximation for metric
distances and prove hardness of approximation for arbitrary
distances.
\cite{bender08} solve the geometric version
of this problem, giving an efficient processor allocator for the Cplant setting
described above, and a polynomial-time approximation scheme (PTAS) for minimizing the
average Manhattan distance.
For the reverse problem of {\em maximizing} the
average Manhattan distance, see~\cite{fm-mdgmwc-04}.

\paragraph{The $k$-median Problem}
Given two sets $D$ and $F$, the $k$-median problem asks to choose a
set of $k$ points from $D$ to minimize the average distance to the
points in $F$. For $k=1$ this is the classical {\em Fermat-Weber
problem}.
\cite{Fekete00,fmb-cfwp-05} considered the city-center problem: for
a given city, find a point that minimizes the average Manhattan
distance.
They proved NP-hardness for general $k$ and
gave efficient algorithms for some special cases.

\paragraph{The Quadratic Assignment Problem}
Our problem is a special case of the quadratic assignment problem
(QAP): Given $n$ facilities, $n$ locations, a matrix containing the
amount of flow between any pair of facilities, and a matrix
containing the distances between any pair of locations. The task is
to assign every facility to a location such that the cost function
which is proportional to the flow between the facilities multiplied
by the distances between the locations is minimized. For a survey
see \cite{loiola07}. The cost function in our problem and in the
QAP are the same if we define the distances as the Manhattan
distances between grid points and if we define all flows to be one.
The QAP can not be approximated within any polynomial factor unless
$P=NP$; see~\cite{sahni76}.
\cite{hls-amqap-09}
considered the metric version of this problem with the flow matrix
being a 0/1 incidence matrix of a graph. They state some
inapproximability results as well as a constant-factor approximation
for the case in which the graph has vertex degree two for all but
one vertex.

\paragraph{The Maximum Dispersion Problem}
The reverse version of the discrete problem, where the goal is to
maximize the average distance between points, has also been studied:
In the maximization version, called the \emph{maximum dispersion
problem}, the objective is to pick $k$ points from a set of size $n$
so that the pairwise distance is maximized. When the edge weights
need not obey the triangle inequality, 
\cite{kp-cds-93} give an $O(n^{0.3885})$-approximation. 
\cite{aitt-gfds-00} improve this guarantee to a constant factor in
the special case when $k=\Omega(n)$ and 
\cite{akk-ptasd-99} give a PTAS when $|E|=\Omega(n^2)$ and
$k=\Omega(n)$.

When the edge weights obey the triangle inequality, 
\cite{rrt-hscad-94} give a $4$-approximation that runs in $O(n^2)$
time and 
\cite{hrt-aamd-97} give a $2$-approximation that runs in
$O(n^2+k^2\log k)$ time. For points in the plane and Euclidean
distances, 
\cite{rrt-hscad-94} give an approximation with performance bound
arbitrarily close to $\pi/2\approx 1.57$. For Manhattan distances,
\cite{fm-mdgmwc-04} give an optimal algorithm for fixed $k$ and a
PTAS for general $k$. Moreover, they provide a
$(\sqrt{2}+\varepsilon)$-Approximation for Euclidean distances.

\paragraph{The Min-Sum $k$-Clustering Problem}
Another related problem is called \emph{min-sum $k$-clustering} or
\emph{minimum $k$-clustering sum}. The goal is to separate a graph
into $k$ clusters to minimize the sum of pairwise distances between
nodes in the same cluster.  For general graphs,
\cite{sahni76} show that this problem is NP-hard to approximate
within any constant factor for $k\geq 3$. In a metric space the
problem is easier to approximate:
\cite{guttmannbeck98} give a $2$-approximation,
\cite{indyk99} gives a PTAS for $k=2$, and
\cite{bartal01} give an
$O(1/\epsilon\log^{1+\epsilon}n)$-approximation for general $k$.

\subsection{Our Results} We solve the $n$-town problem with an
$O(n^{7.5})$-time dynamic-programming algorithm.
Our algorithm is based on some properties of an optimal town:
an optimal $n$-town is \emph{convex} in the sense that it contains
all grid points within its convex hull (Lemma~\ref{lem:convex}).
It lies symmetric with respect to a horizontal and a vertical
symmetry axis within a tolerance of $\pm1$ that is due to parity
issues
 (Lemma~\ref{lem:symmetry}).
Furthermore, it fits in an $O(\sqrt n)\times O(\sqrt n)$ square
 (Lemma~\ref{bounded-width}).
 We also present
computational results and discuss the relation between
the optimum continuous cities and their discretized counterparts
($n$-towns, and $n$-block cities).

\section{Properties of Optimal Towns}\label{sec:prelim}
We want to find a set of $n$ distinct points from the integer grid
$\mathbb{Z} \times \mathbb{Z}$ such that the sum of all pairwise
Manhattan distances is minimized. A set $S \subset \mathbb{Z} \times
\mathbb{Z}$ of cardinality $n$ is an \emph{$n$-town}. An $n$-town
$S$ is \emph{optimal} if its \emph{cost} 
\begin{equation}
  \label{eq:c}
c(S) := \frac12 \cdot
\sum_{s \in S}
\sum_{t \in S}
 \|s-t\|_1
\end{equation}
is minimum.
Figure~\ref{fig:smalltowns} shows solutions for small $n$ and their
cost, and Table~\ref{table:smalltowns} in Section~\ref{sec:results}
 shows optimal cost values
$c_{\textrm{town}}(n)$ for $n\le80$.
 We define the \emph{$x$-cost} $c_x(S)$ as
$\sum_{\{s,t\} \in S \times S}|s_x-t_x|$, where $s_x$ is the
$x$-coordinate of $s$; \emph{$y$-cost} $c_y(S)$ is the sum of all
$y$-distances, and $c(S) = c_x(S) + c_y(S)$. For two
sets $S$ and $S'$, we define $c(S,S')=\sum_{ \{s,s'\}\in S\times
S'}\|s-s'\|_1$. If $S$ consists of a single point $t$, we write
$c(t,S')$ instead of $c(\{t\},S')$ for convenience. A town $S$ is
\emph{convex} if the set of grid points in the convex hull of $S$
equals $S$.







In proving various properties of optimal towns, we will often make a
local modification by moving a point $t$ of a town to a different
place~$r$. The next lemma expresses the resulting cost change.

\begin{lemma}\label{lem:changecostfunc}
Let $S$ be a town, $t\in S$ and $r\notin S$. Then,
$$
c( (S\setminus t)\cup r) = c(S) - c(t,S) + c(r,S) - \|r-t\|_1 \; .
$$
\end{lemma}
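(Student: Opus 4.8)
The plan is to directly expand the definition of the cost function and track exactly which pairwise distances change when we remove $t$ and insert $r$. Let $S' = (S\setminus t)\cup r$. Since $S$ and $S'$ differ only at one point (we replace $t$ by $r$), the two towns share the common set $S\setminus t$. The key observation is that the pairwise distances among points of $S\setminus t$ are identical in both towns, so those contributions cancel, and only the distances involving the swapped point matter.

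First I would write $c(S)$ as the sum of distances within $S\setminus t$ plus the distances from $t$ to every other point, i.e. $c(S) = c(S\setminus t) + c(t, S\setminus t)$, where the second term accounts for all pairs containing $t$. Analogously, $c(S') = c(S\setminus t) + c(r, S\setminus t)$, since in $S'$ the point $r$ plays the role that $t$ played, and $r\notin S$ guarantees $r$ is genuinely new so no pair is double-counted. Subtracting gives $c(S') = c(S) - c(t, S\setminus t) + c(r, S\setminus t)$. This is already close to the claim, but the stated formula uses $c(t,S)$ and $c(r,S)$, which are sums over all of $S$ rather than $S\setminus t$.

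The remaining step is to reconcile $c(\cdot, S\setminus t)$ with $c(\cdot, S)$. For the removed point, $c(t, S\setminus t) = c(t, S)$ because $\|t-t\|_1 = 0$ adds nothing, so including or excluding $t$ itself in the sum makes no difference. For the inserted point, $c(r, S\setminus t) = c(r, S) - \|r - t\|_1$, since $S = (S\setminus t)\cup\{t\}$ and the extra term in summing over all of $S$ is precisely the distance from $r$ to $t$. Substituting these two identities into the subtraction yields exactly
$$
c(S') = c(S) - c(t,S) + c(r,S) - \|r-t\|_1,
$$
as desired.

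I do not anticipate a genuine obstacle here, as the lemma is essentially a bookkeeping identity. The only point requiring care is the $\pm\|r-t\|_1$ correction term: one must be careful about whether the self-distance $\|t-t\|_1$ or the cross-distance $\|r-t\|_1$ is being added or removed when switching between summation over $S$ and over $S\setminus t$. Keeping the conventions for $c(t,S)$ and $c(r,S)$ straight (both ranging over the full set $S$) and tracking the single corrective term is the entire content of the proof.
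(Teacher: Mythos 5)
Your proof is correct and takes essentially the same route as the paper: both are direct bookkeeping of which pairwise distances change, with the single corrective term $-\|r-t\|_1$ accounting for the removed point. The paper sums the per-point change $\|r-p\|_1-\|t-p\|_1$ over $p\in S$ and then subtracts $\|r-t\|_1$, whereas you decompose through the common set $S\setminus t$; this is only an organizational difference, and your version is if anything slightly more careful about the conventions.
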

\begin{proof}
Let $p$ be a point in $S$. Then, its distance to $t$ is $\|t-p\|_1$
and the distance to $r$ is $\|r-p\|_1$. Hence, the change in the
cost function is $\|r-p\|_1 - \|t-p\|_1$. We need to subtract
$\|r-t\|_1$ from the sum over all points in $S$ because $t$ is
removed from~$S$.
\qed
\end{proof}

\begin{lemma}\label{lem:convex}
An optimal $n$-town is convex.
\end{lemma}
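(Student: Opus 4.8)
The plan is to argue by contradiction using the single-point relocation analyzed in Lemma~\ref{lem:changecostfunc}. Suppose $S$ is an optimal $n$-town that is not convex. Then there is a grid point $r$ lying in the convex hull of $S$ with $r\notin S$. I would like to move some point $t\in S$ to the empty position $r$; by Lemma~\ref{lem:changecostfunc} this changes the cost by $c(r,S)-c(t,S)-\|r-t\|_1$. Hence it suffices to exhibit a $t\in S$ with $c(r,S)\le c(t,S)$, since then the change is at most $-\|r-t\|_1<0$, contradicting optimality.

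The key observation is that, for a fixed set $S$, the function $x\mapsto c(x,S)=\sum_{p\in S}\|x-p\|_1$ is convex on $\mathbb{R}^2$, being a sum of the convex functions $x\mapsto\|x-p\|_1$. I would look for $t$ among the vertices of the convex hull of $S$, all of which belong to $S$. Writing $r$ as a convex combination $r=\sum_i\lambda_i v_i$ of these hull vertices $v_i$, Jensen's inequality gives $c(r,S)\le\sum_i\lambda_i\, c(v_i,S)\le\max_i c(v_i,S)$, so at least one hull vertex $t=v_i$ satisfies $c(t,S)\ge c(r,S)$. Since $t\in S$ while $r\notin S$, the two points are distinct and $\|r-t\|_1>0$, so moving $t$ to $r$ strictly decreases the cost, which is the desired contradiction.

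The main point to get right is the choice of $t$: one cannot relocate an arbitrary point and hope for an improvement, and it is precisely the convexity of $c(\cdot,S)$ together with the representation of $r$ as a convex combination of hull vertices that guarantees a vertex no more central than the interior hole $r$. A minor check is that this argument covers holes on the boundary of the hull as well as in the interior, since both are convex combinations of vertices, and convexity of the town requires filling in both. Everything else is a direct substitution into Lemma~\ref{lem:changecostfunc}, so I expect no serious obstacle beyond correctly invoking convexity and Jensen's inequality.
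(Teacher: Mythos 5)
Your proof is correct and follows essentially the same route as the paper's: both exploit convexity of $x\mapsto c(x,S)$ together with a convex-combination (Jensen) argument to find a point $t\in S$ with $c(t,S)\ge c(r,S)$, and then apply Lemma~\ref{lem:changecostfunc} to get a strict improvement from the term $-\|r-t\|_1$. The only cosmetic difference is that you restrict the convex combination to hull vertices, while the paper uses arbitrary points of $S$; both are equally valid.
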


The following proof holds in any dimension and with any norm for measuring
the distance between points.
\begin{proof}
We prove that a nonconvex $n$-town $S$ cannot be optimal.
Take a grid point $x \notin S$ in the convex hull of $S$.
Then there are points $x_1,x_2,\ldots,x_k \in S$ such that
$x =\lambda_1 x_1 + \lambda_2 x_2 +\cdots + \lambda_k x_k$ for some
$\lambda_1, \lambda_2,\ldots, \lambda_k \ge 0$ with $\sum \lambda_i
= 1$. Because every norm is a convex function, and the sum of convex
functions is again convex, the function $f_S(x)=c(x,S)=\sum_{s \in
S} \|x-s\|_1$ is convex. Therefore, $$f_S(x) \le \lambda_1 f(x_1) +
\lambda_2 f(x_2)+\cdots + \lambda_k f(x_k),$$ which implies $f_S(x)
\le f_S(x_i)$ for some $i$. Using Lemma~\ref{lem:changecostfunc} we
get
$$c((S\setminus x_i) \cup x) = c(S) - f_S(x_i) + f_S(x) - \|x-x_i\|_1
<c(S).$$
 This means that $S$ is not optimal.
\qed
\end{proof}

Obviously, if we translate every point from an $n$-town by the same
vector, the cost of the town does not change. We want to distinguish
towns because of their shape and not because of their position
inside the grid and, therefore, we will only consider optimal towns
that are placed around the origin. Lemma~\ref{lem:symmetry} makes
this more precise:
an optimal $n$-town is roughly
symmetric with respect to a vertical and a horizontal symmetry line,
see Fig.~\ref{fig:dynamicprogram} for an illustration.
 Perfect symmetry is not possible
since some rows or columns may have odd length and others even length.

We need some notation before: For an $n$-town
$S$, the $i$-th \emph{column} of $S$ is the set $C_i= \{\, (i,y) \in S
: y \in \mathbb{Z}\,\}$ and the $i$-th \emph{row} of $S$ is the set
$R_i= \{\, (x,i) \in S : x \in \mathbb{Z}\,\}$.

\begin{lemma}[Symmetry]
\label{lem:symmetry}
In every optimal $n$-town $S$, the centers of all rows of odd length
lie on a common vertical grid line $V_o$. The centers of all rows of even
length lie on a common line $V_e$ that has distance $\frac12$ from
$V_o$. A corresponding statement holds for the centers of odd and
even columns that lie on horizontal lines $H_o$ and $H_e$ of
distance~$\frac12$. Moreover, without changing its cost, we can
place $S$ such that $H_o$ and $V_o$ are mapped onto the $x$-axis and
$y$-axis, respectively, and $H_e$ and $V_e$ lie in the negative
halfplanes.
\end{lemma}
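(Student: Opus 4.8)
The plan is to exploit the separability $c(S)=c_x(S)+c_y(S)$ and treat the horizontal and vertical symmetries independently; I will describe the horizontal (row) case, the column case being identical after exchanging coordinates. First I would fix an optimal town $S$ and, invoking Lemma~\ref{lem:convex}, note that each row $R_i$ is a contiguous block of $\ell_i$ grid points, with a well-defined center $c_i=a_i+(\ell_i-1)/2$ (an integer when $\ell_i$ is odd, a half-integer when $\ell_i$ is even). The key observation is that re-placing the rows horizontally --- choosing any integers $a_i$ while keeping their lengths and heights --- always yields a valid town (points in distinct rows never collide) and does not change $c_y$, since $c_y=\sum_{i<i'}\ell_i\ell_{i'}|i-i'|$ depends only on row lengths and heights. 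Hence an optimal $S$ must minimize $c_x$ over all such re-placements (the competitors need not themselves be convex), and it suffices to characterize the minimizers of $c_x$ as a function of the center vector $(c_i)$.

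Next I would rewrite the $x$-cost. Pairs within one row contribute a constant depending only on the $\ell_i$, while for two rows $i,j$ the contribution is $g_{ij}(c_i-c_j)=\sum_{s,t}\bigl|(c_i-c_j)+(s-t)\bigr|$, where $s$ ranges over the offset set $\{-(\ell_i-1)/2,\dots,(\ell_i-1)/2\}$ and $t$ over the analogous set for $j$. Both offset sets are symmetric about $0$, so a substitution $s\mapsto-s,\ t\mapsto-t$ shows each $g_{ij}$ is an \emph{even}, convex, piecewise-linear function of the center difference $\delta=c_i-c_j$. Thus $c_x=\mathrm{const}+\sum_{i<j}g_{ij}(c_i-c_j)$ is convex in $(c_i)$.

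The heart of the argument is a parity dichotomy for the minimum of each $g_{ij}$. Computing the right derivative of $g_{ij}$ at $\delta=0$, the antisymmetric contributions cancel and leave $m_0$, the number of offset values common to both rows. Hence: if $\ell_i,\ell_j$ have the same parity then $0$ is a common offset, $m_0\ge1$, and $g_{ij}$ has a \emph{strict} minimum at $\delta=0$; if they have opposite parity the offset sets are disjoint (integers versus half-integers), the slope vanishes on $(-\tfrac12,\tfrac12)$, and $g_{ij}$ is minimized exactly on the flat interval $[-\tfrac12,\tfrac12]$. Crucially, all these individual minima are attained \emph{simultaneously}: placing every odd-length center at a common integer $v_o$ and every even-length center at $v_o\pm\tfrac12$ makes each same-parity difference $0$ and each opposite-parity difference $\pm\tfrac12$, so every term sits at its own minimum. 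Therefore the global minimum of $c_x$ equals $\sum_{i<j}\min g_{ij}$, which forces \emph{every} optimal configuration to attain each term's minimum. By the dichotomy this means all odd centers coincide (defining the grid line $V_o$), all even centers coincide (defining $V_e$), and every odd--even difference is $\pm\tfrac12$, so $V_e$ lies at distance $\tfrac12$ from $V_o$.

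Applying the same reasoning to columns yields the horizontal lines $H_o,H_e$ with $|H_o-H_e|=\tfrac12$. Finally, since $v_o$ and the corresponding column value are integers, translating $S$ by an integer vector places $V_o$ on the $y$-axis and $H_o$ on the $x$-axis without leaving the grid or changing the cost; a horizontal and/or vertical reflection (an $L_1$ isometry, hence preserving cost and convexity, that fixes $V_o$ and $H_o$) moves $V_e$ and $H_e$ into the negative halfplanes as required. I expect the main obstacle to be the ``every optimal town'' quantifier: the whole argument hinges on proving that the individual interaction minima are \emph{simultaneously} achievable, so that the optimum equals their sum and no optimal town can overpay on any single pair --- it is this global identity, together with the strict-versus-flat behaviour of $g_{ij}$ at $0$, that pins the centers down exactly rather than merely up to the slack inherent in the convex program.
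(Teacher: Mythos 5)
Your proof is correct and follows essentially the same route as the paper's: decompose $c(S)=c_x(S)+c_y(S)$, observe that integer horizontal translations of individual rows preserve $c_y$ and validity, characterize each pairwise row--row cost as an even convex function minimized at center-offset $0$ (same parity) or on $[-\tfrac12,\tfrac12]$ (opposite parity), and conclude via simultaneous achievability that every optimal town attains all pairwise minima. The only differences are cosmetic: you carry out the subdifferential computation that the paper dismisses as ``a straightforward calculation,'' and you finish with axis reflections where the paper uses rotations by multiples of $90^\circ$.
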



\begin{proof}
For a row $R_j$ and $r\in \mathbb{Z}$, let $R_j+(r,0)$ be
the row $R_j$ horizontally translated by $(r,0)$. If two rows $R_i$ and $R_j$
are of the same parity, a straightforward calculation
(using Lemma~\ref{lem:changecostfunc}) shows that
the cost $c(R_i,R_j+(r,0))$ is minimal if
and only if the centers of $R_i$ and $R_j+(r,0)$ have the same $x$-coordinate. If the parities differ, $c(R_i,R_j+(r,0))$ is
minimized with centers having $x$-coordinates of distance 1/2.
The total cost is
\begin{eqnarray*}
c(S) = c_x(S) + c_y(S)  =
 \sum_{i,j} \sum_{s \in R_i, t \in R_j} |s_x -t_x| + c_y(S)
\end{eqnarray*}
If we translate every row $R_i$ of $S$ horizontally by some
$(r_i,0)$, $c_y(S)$ does not change. The solutions that minimize
$\sum_{s \in R_i, t \in R_j} |s_x + r_i -t_x + r_j|$ for all $i,j$
simultaneously are exactly those that align the centers of all rows
of even length on a vertical line $V_e$ and the centers of all rows
of odd length on a vertical line $V_o$ at offset $\frac12$ from
$V_e$. The existence of the lines $H_o$ and $H_e$ follows
analogously.

We can translate $S$ such that $H_o$ and $V_o$ are mapped onto the
$x$- and the $y$-axis and rotate it by a multiple of $90^\circ$
degrees such that $H_e$ and $V_e$ lie in the negative halfplanes.
These operations do not change $c(S)$. \qed
\end{proof}

{From} the convexity statement in Lemma~\ref{lem:convex} (together
with Lemma~\ref{lem:symmetry}) we know that $C_0$ is the largest
column, and the column lengths decrease to both sides, and similarly
for the rows.
Our algorithm will only be based on this weaker property
(\emph{orthogonal convexity}); it will not make use of convexity
\emph{per se}.
We will, however, use convexity 
one more time to prove that the
lengths of the columns are $O(\sqrt n)$, in order to reduce the
running time.

In the following we assume the symmetry property of the last lemma.
%
%
%
%
%
 For an $n$-town $S$, let the \emph{width} of $S$ be
$w(S)=\max_{i \in \mathbb{Z}} |R_i|$ and the \emph{height} of $S$ be
$h(S)=\max_{i \in \mathbb{Z}} |C_i|$. We will now show that the width
and the height cannot differ by more than a factor of~2.
Together with convexity, this will imply that they are bounded by
$O(\sqrt n)$ (Lemma~\ref{bounded-width}).

\begin{lemma}
\label{relative-bound-width} For every optimal $n$-town $S$,
$$w(S) > h(S)/2 -3 \mbox{ and }h(S) > w(S)/2 - 3.$$
\end{lemma}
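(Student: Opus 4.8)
The two inequalities are symmetric to each other, so I would prove only the first. Reflecting the grid across the main diagonal, $(x,y)\mapsto(y,x)$, is an $L_1$-isometry that preserves the cost and hence maps optimal towns to optimal towns while exchanging $w(S)$ and $h(S)$. Thus ``$w(S) > h(S)/2 - 3$ for every optimal town'', applied to the reflected town, immediately yields $h(S) > w(S)/2 - 3$. Writing $w=w(S)$, $h=h(S)$ and assuming the placement of Lemma~\ref{lem:symmetry}, I would argue by contradiction: suppose the town is tall and thin, $w \le h/2 - 3$, and build a single-point move that strictly lowers the cost.

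\textbf{The move.} Let $t$ be a topmost point of the tallest column $C_0$; by the centering in Lemma~\ref{lem:symmetry} it has $t_x = 0$ and $t_y \ge h/2 - 1$. Let $r$ be the vacant grid point just beyond the right end of the widest row $R_0$, so that $r_y = 0$, and since $R_0$ is longest and centered, $r_x \le w/2 + 1$ and $r \notin S$. By Lemma~\ref{lem:changecostfunc} the change in cost is
\begin{equation*}
  \Delta = c(r,S) - c(t,S) - \|r-t\|_1 = c(r, S\setminus t) - c(t, S\setminus t) = \Delta_x + \Delta_y ,
\end{equation*}
and the goal is to show that the vertical part $\Delta_y$ is a large saving that beats the horizontal increase $\Delta_x$, so $\Delta < 0$.

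\textbf{Why the saving should win.} Because $t$ is a highest point, every $|t_y - s_y|$ equals $t_y - s_y$, so moving to height $0$ turns it into $|s_y|$; grouping points by rows, the vertical saving is at least
\begin{equation*}
  c_y(t,S) - c_y(r,S) = \sum_k |R_k|\,\bigl[(t_y - k) - |k|\bigr] ,
\end{equation*}
in which every point at height $\le 0$ contributes the full $t_y$, making the saving of order $n\,t_y$. Symmetrically, grouping by columns makes the horizontal increase $\Delta_x$ of order $n\,r_x$. Since $t_y \ge h/2 - 1$ while $r_x \le w/2 + 1$, the hypothesis $w \le h/2 - 3$ forces $t_y$ to exceed $r_x$ with room to spare, and I would then conclude $\Delta < 0$, contradicting optimality.

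\textbf{The main obstacle.} The delicate point is that $\Delta_y$ and $\Delta_x$ are both of the form (number of points)$\times$(offset) and both offsets $t_y,r_x$ carry the same factor $n$, so a crude bound leaves the two on the same order and the comparison undecided. What I expect to rescue the argument is tight control of the ``upper mass'' $\sum_{s_y>0}s_y$ and its horizontal analogue: orthogonal convexity --- rows and columns contiguous, nested, and centered, from Lemma~\ref{lem:convex} together with Lemma~\ref{lem:symmetry} --- is exactly the structure that bounds these masses sharply enough. This is where the factor $2$ and the additive slack $-3$ (coming from the $\pm 1$ centering tolerance and from $t_y \ge h/2-1$, $r_x \le w/2+1$) arise, and making these estimates sharp is the one genuinely technical step.
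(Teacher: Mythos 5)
Your setup coincides exactly with the paper's proof: the same exchange (topmost point $t=(0,l)$ of the central column moved to $r=(k+1,0)$ just beyond the longest row), the same appeal to Lemma~\ref{lem:changecostfunc}, and the same strategy of showing that the vertical saving dominates the horizontal increase. But the proposal stops precisely where the work is. You never establish $c(r,S)-c(t,S)\le 0$; you say yourself that the crude comparison ``leaves the two on the same order'' and that making the estimates sharp is ``the one genuinely technical step.'' That step \emph{is} the proof; what you have written is a plan for it, not an argument.

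Moreover, the heuristic you offer is not merely unfinished but points the wrong way. The claim that the vertical saving is ``of order $n\,t_y$'' fails: a point at height $i>0$ saves only $t_y-2i$, which is \emph{negative} for $i>t_y/2$, so the upper part of the town works against the move. The correct accounting --- which is what the paper does --- is column by column: by Lemma~\ref{lem:symmetry} each column $C_j$ of height $s$ is vertically centered (up to $\pm\tfrac12$), so $\sum_{(j,i)\in C_j}|i|\approx s^2/4$, and the column's net vertical saving is about $s\,l-s^2/4\ge s(h/4-O(1))$ (using $s\le h$), while its horizontal increase is at most $s(k+1)\le s(w+1)/2$. Comparing $h/4$ against $w/2$ is where the factor $2$ in the lemma actually comes from. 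Importantly, the resulting per-column bound, $\tfrac{s}{4}(s-2h+2w+6)\le 0$ in the paper's notation, is \emph{tight}: at $w=h/2-3$ and $s=h$ it equals zero, so the exchange is cost-neutral column by column and the strict improvement comes only from the extra term $-\|r-t\|_1$ in Lemma~\ref{lem:changecostfunc}. There is no ``room to spare'' as you assert; no order-of-magnitude or slack-based argument can close the comparison, and the exact centered-column computation (with its floors and ceilings) is unavoidable. Your diagonal-reflection reduction to a single inequality is fine, and your instinct that symmetry must control the ``upper mass'' is the right one, but as written the proposal has a genuine gap at its central estimate.
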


\begin{proof}
Let $S$ be an $n$-town, $w=w(S)$, and $h=h(S)$, and assume
 $w\le h/2-3$.
 Let $t=(0,l)$ be the
topmost and $(k,0)$ be the rightmost point of $S$,
with
$l=\lfloor \frac{h-1}2\rfloor$
and
$k=\lfloor \frac{w-1}2\rfloor$. Let $r=(k+1,0)$.
We show that $c((S \setminus t) \cup r) < c(S)$,
and
thus, $S$ is not optimal.
By Lemma~\ref{lem:changecostfunc},
 the change in cost is $c(r,S)-c(t,S)-|k+l+1|$.
We show that $c(r,S)-c(t,S)\le0$ by
 calculating this difference column by column.
This proves then that replacing $t$ with $r$ yields a gain
of at least $|l+k+1| \ge 1$, and we are done.
%
Let us calculate the difference
 $c(r,C_j) - c(t,C_j)$ for a column $C_j$ of height $|C_j|=s\le h$:
\begin{align*}
 c(r,C_j)-c(t,C_j)
&= \sum_{i=-\lceil \frac{s-1}{2} \rceil}^{\lfloor \frac{s-1}{2} \rfloor}\! (|i|-(l-i))
 + s(k+1-j-|j|)
\\&= \sum_{i=0}^{\lfloor \frac{s-1}{2} \rfloor} (i-(l-i)) +
 \sum_{i=1}^{\lceil \frac{s-1}{2} \rceil} (i-(l+i))
 + s(k+1-j-|j|)
\\&= 2 \sum_{i=0}^{\lfloor \frac{s-1}{2} \rfloor}\! i
 -sl
 + s(k+1-j-|j|)
\\&
 \le \tfrac{(s-1)(s+1)}{4} -s\tfrac{h-2}2 + s(k+1)
 \le \tfrac{s^2}{4} -s\tfrac{h-2}2 + s\cdot\tfrac{w+1}2
\\&
 = \tfrac{s}{4}(s-2h+2w+6)
 \le \tfrac{s}{2}(-h+2w+6)
\le 0
\end{align*}
%
\qed
\end{proof}

\begin{lemma}
\label{bounded-width} For every optimal $n$-town we have
$$\max \{ w(S), h(S)  \} \le 2 \sqrt{n} + 5.$$
\end{lemma}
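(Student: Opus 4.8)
The plan is to combine the two tools already at hand: global convexity (Lemma~\ref{lem:convex}) to guarantee that $S$ fills a large region densely, and the relative bound of Lemma~\ref{relative-bound-width} to force the two dimensions to be comparable. Since the claimed inequality is symmetric in $w(S)$ and $h(S)$, I would assume without loss of generality that $w := w(S) \ge h := h(S)$, so that $\max\{w,h\} = w$, and aim to show $w \le 2\sqrt n + 5$.

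First I would locate the four extreme points of $S$ after the symmetric placement of Lemma~\ref{lem:symmetry}. By orthogonal convexity together with the symmetry of Lemma~\ref{lem:symmetry}, the central row already realizes the full horizontal extent of $S$ and the central column realizes the full vertical extent, so $S$ contains points whose $x$-coordinates differ by $w-1$ and points whose $y$-coordinates differ by $h-1$. By convexity, $S$ then contains \emph{every} grid point in the convex hull of these four extreme points, which is a diamond whose horizontal diagonal has length about $w$ and whose vertical diagonal has length about $h$.

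The core of the argument is a lattice-point count: the number of integer points in such a diamond is approximately its area $\tfrac12 wh$, so $n \ge \tfrac12 wh - O(w+h)$. Feeding in the relative bound $h > w/2 - 3$ from Lemma~\ref{relative-bound-width} then yields $n \ge \tfrac12 w(\tfrac{w}{2} - 3) - O(w) = \tfrac{w^2}{4} - O(w)$. Solving the resulting quadratic inequality $w^2 - O(w) \le 4n$ for $w$ gives $w \le 2\sqrt n + c$ for a small constant $c$, which is exactly the shape of the claimed bound. The factor $2$ in front of $\sqrt n$ comes directly from the $\tfrac{w^2}{4}$ leading term, i.e.\ from the fact that the diamond captures all four quadrants rather than just one triangle (a triangular count would only give $n \gtrsim w^2/16$ and hence the weaker $w \le 4\sqrt n + O(1)$).

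The main obstacle I expect is bookkeeping rather than conceptual: to land on the precise constant $5$ (instead of a looser $2\sqrt n + O(1)$) one must count the lattice points in the diamond carefully, tracking the floors $\floor{(w-1)/2}$ and $\floor{(h-1)/2}$ and the parity-induced half-unit offsets between the even and odd symmetry lines $V_e, V_o, H_e, H_o$. These lower-order terms actually work in our favour --- they only add points, and thus shrink the bound on $w$ --- so the crude area estimate already gives the correct leading behaviour $2\sqrt n$, and the careful count serves only to tighten the additive constant down to $5$.
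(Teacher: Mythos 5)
Your proposal is correct and follows essentially the same route as the paper: the diamond spanned by the four extreme points (placed via Lemma~\ref{lem:symmetry}), Lemma~\ref{lem:convex} to capture all its lattice points, the relative bound of Lemma~\ref{relative-bound-width}, and a quadratic inequality in the larger dimension. The ``careful count'' you defer is done in the paper with Pick's theorem, which confirms your claim that the lower-order terms only help: the number of lattice points is at least the area $(w-1)(h-1)/2$ plus~$3$, and solving the resulting quadratic gives exactly the constant~$5$.
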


\begin{proof}
Let $w=w(S)$ and $h=h(S)$. Assume without loss of generality
that $h\ge w$.
We know from Lemma
\ref{relative-bound-width} that $w > h/2 - 3$.
By Lemma \ref{lem:symmetry}, we choose a topmost, a rightmost,
a bottommost, and a leftmost point of $S$ such that the convex hull
of these four points is a quadrilateral with a vertical and horizontal diagonal,
approximately diamond-shaped.
Let $H$ be the set of all grid points contained in this quadrilateral.
The area of the quadrilateral equals $(w-1)(h-1)/2$, and its boundary
contains at least 4 grid points.
Pick's theorem says that the area of a simple grid polygon equals
the number of its interior grid points $H_i$ plus half of the number of
the grid points $H_0$ on its boundary minus 1.
This implies $|H| = |H_i|+|H_0|
=
(|H_i|+|H_0|/2-1) + |H_0|/2+1
 \ge  (w-1)(h-1)/2+3$.
Because of Lemma~\ref{lem:convex}, all points in $H$ belong to $S$.
Since $H$ consists of at most $n$ points, we have
$$
n \geq |H| \ge (w-1)(h-1)/2 + 3 >
(h/2 - 4)(h-1)/2 + 3
$$ 
Solving the equation
 $
 h^2-9h+20-4n
 = 0$ shows that
$$h \le 2
  \sqrt{n+ 1/16} + 9/2 \le 2\sqrt n + 5.
\eqno\hbox{\qed}$$
\end{proof}

\section{Computing Optimal Solutions}\label{sec:optimal}
\begin{figure}[tb]
\begin{center}
\includegraphics[height=6cm]{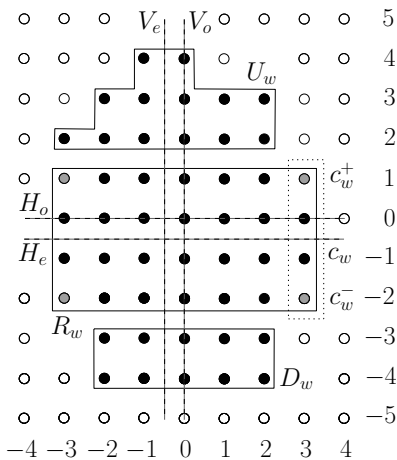} \caption{The
lines $V_o$, $V_e$, $H_o$, and $H_e$ from Lemma~\ref{lem:symmetry}.
The rectangle $R_w$ and the set of points above and below it with
cardinality $U_w$ and $D_w$, respectively. The gray points are the
corner points of $R_w$. In this example, the height $c_w$ of column
$w$ is set to $c=4$.} \label{fig:dynamicprogram}
\end{center}
\end{figure}

We will now describe a dynamic-programming algorithm for computing
optimal towns. A program for this algorithm is listed
in~\ref{code}.

We denote by $c_i=|C_i|$ the number of selected points in column $i$
and by $c_i^+$ and $c_i^-$ the row index of the topmost and
bottommost selected point in $C_i$, respectively. We have $c_i =
c_i^+ - c_i^- +1$; see Fig.~\ref{fig:dynamicprogram}.

\begin{lemma}\label{lem:rectangle}
Let $S$ be an optimal $n$-town \textup(placed as described in
Lemma~\ref{lem:symmetry}\textup) containing the points $(i,c_{i}^+)$ and
$(i,c_{i}^-)$, for $i\geq 0$. Then all points inside the rectangle
$[-i,i]\times[c_{i}^-, c_{i}^+]$ belong to $S$.

Similarly, if $S$ contains
the points $(-i,c_{-i}^+)$ and $(-i,c_{-i}^-)$, for $i\geq1$, then it contains all
points in the rectangle
$[-i,i-1]\times[c_{-i}^-, c_{-i}^+]$.
\end{lemma}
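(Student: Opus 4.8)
The plan is to reduce the two-dimensional rectangle claim to a one-dimensional statement about rows, exploiting the orthogonal convexity established after Lemma~\ref{lem:symmetry} together with the centering of rows on the lines $V_o$ and $V_e$ from the same lemma.

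First I would observe that the column $C_i$ is a contiguous vertical segment: since $S$ is orthogonally convex and $(i,c_i^-),(i,c_i^+)\in S$ are by definition its bottommost and topmost selected points, every grid point $(i,y)$ with $c_i^-\le y\le c_i^+$ lies in $S$. Hence for each such $y$ the row $R_y$ contains the point $(i,y)$, so the rightmost point of $R_y$ has $x$-coordinate at least $i$. It then suffices to show that every such row also reaches at least $x=-i$ on the left, because the union of the segments $[-i,i]\times\{y\}$ over $c_i^-\le y\le c_i^+$ is exactly the desired rectangle.

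To get the left reach I would use orthogonal convexity once more to write each $R_y$ as a contiguous horizontal segment, and Lemma~\ref{lem:symmetry} to pin down its midpoint: if $R_y$ has odd length it is centered on $V_o$ (the $y$-axis), and if it has even length it is centered on $V_e$, i.e.\ at $x=-\tfrac12$. A short parity case analysis converts the known right reach into a left reach. An odd row of length $2c+1$ spans $[-c,c]$, so a right reach of $c\ge i$ forces the left reach $-c\le -i$; an even row of length $2c$ spans $[-c,c-1]$, so a right reach of $c-1\ge i$ gives $c\ge i+1$ and hence $-c\le -(i+1)\le -i$. In both cases the segment $[-i,i]\times\{y\}$ lies in $S$, which proves the first statement.

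For the second statement I would run the identical argument starting from the filled column $C_{-i}$, so that each relevant row now reaches at least $x=-i$ on the \emph{left}. Replaying the parity analysis in the opposite direction, an odd row centered at $0$ then reaches $x=i$ on the right, whereas an even row centered at $-\tfrac12$ reaches only $x=i-1$; the even rows are the bottleneck, and this half-unit offset is precisely what yields the asymmetric rectangle $[-i,i-1]$ rather than $[-i,i]$. I expect this parity bookkeeping---keeping straight which rows sit on $V_o$ and which on $V_e$, and why the even rows cost exactly one column on the right---to be the only genuinely delicate point; everything else follows directly from orthogonal convexity.
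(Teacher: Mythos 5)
Your proof is correct, and it reaches the conclusion by a genuinely different (more hands-on) route than the paper. The paper's own proof is a two-step argument applied only to the two extreme rows: by Lemma~\ref{lem:symmetry}, the presence of $(i,c_i^+)$ and $(i,c_i^-)$ forces the mirror points $(-i,c_i^+)$ and $(-i,c_i^-)$ to lie in $S$, and then Lemma~\ref{lem:convex} fills in every grid point of the convex hull of these four corners, which is exactly the rectangle $[-i,i]\times[c_i^-,c_i^+]$; the second rectangle is handled the same way. You instead slice the rectangle into horizontal segments: column contiguity gives $(i,y)\in S$ for all $c_i^-\le y\le c_i^+$, and then for each such row you combine contiguity with the centering on $V_o$ or $V_e$ and a parity case analysis to get the left reach $-i$ (respectively, only $i-1$ on the right in the mirrored case, which correctly explains the asymmetric rectangle). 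What your version buys is that it uses only \emph{orthogonal} convexity (contiguous rows and columns) plus the symmetry lemma, rather than a second invocation of full convexity of $S$ -- which fits nicely with the paper's remark that its algorithm relies only on orthogonal convexity; your parity bookkeeping is essentially the expanded content of the step the paper compresses into ``by Lemma~\ref{lem:symmetry}, the mirror points belong to $S$,'' applied to every row instead of just the top and bottom ones. What the paper's version buys is brevity: one reflection plus one convex-hull argument, with no case distinctions.
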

\begin{proof}
If $(i,c_{i}^+)$ and $(i,c_{i}^-)$ are contained in $S$ then, by
Lemma~\ref{lem:symmetry}, $(-i,c_{i}^+)$ and $(-i,c_{i}^-)$ belong
to $S$ as well. By Lemma~\ref{lem:convex} all points inside the
convex hull of these four points are contained in $S$. The same
arguments hold for the second rectangle.  \qed
\end{proof}

Now we describe the dynamic program. It starts with the initial
empty grid and chooses new columns alternating from the set of
columns with nonnegative and with negative column index, i.e.,
in the order
$0,-1,1,-2,2,\ldots$. Let $w \geq 0$ be the index of the currently
chosen column and fix $c_w$ to a value~$c$. We describe the dynamic
program for columns with nonnegative index; columns with negative
index are handled similarly.
(In the program that is described in the appendix, we use a trick to
avoid dealing with negative columns: they are mapped to columns with positive
index by reflecting everything at the $y$-axis, with a proper
adjustment to take into account that the placement of
Lemma~\ref{lem:symmetry} is not invariant under this transformation.)

We know from Lemma~\ref{lem:rectangle} that in every optimal
solution, every point inside the rectangle $R_w =
[-w,w]\times[c_w^-, c_w^+]$ is selected.
We define
$$\cost(w,c,\Delta_{w}^{\mathrm{UR}},
\Delta_{w}^{\mathrm{DR}}, \Delta_{w}^{\mathrm{UL}},
\Delta_{w}^{\mathrm{DL}},U_w,D_w)$$
as the minimum cost of a town
with columns $-w,\ldots,w$ of height $c_i\ge c$ for $-w\le i\le w$
and $c_w=c$ where $U_w$ points lie above the rectangle $R_w$, having
a total distance $\Delta_w^{\mathrm{UL}}$ and
$\Delta_w^{\mathrm{UR}}$ to the upper-left and upper-right corner of
$R_w$, respectively, and $D_w$ points lie below $R_w$, having a
total distance $\Delta_w^{\mathrm{DL}}$ and $\Delta_w^{\mathrm{DR}}$
to the lower-left and lower-right corner of $R_w$. For a given $n$,
we are looking for the $n$-town with minimum cost where $(2w+1)c +
U_w + D_w = n$. Next we show that
$\cost(w,c,\Delta_{w}^{\mathrm{UR}}, \Delta_{w}^{\mathrm{DR}},
\Delta_{w}^{\mathrm{UL}}, \Delta_{w}^{\mathrm{DL}},U_w,D_w)$ can be
computed recursively.

Consider the current column $w$ with $c_w=c$. The cost from all
points in this column to all points above $R_w$, in $R_w$, and below
$R_w$ can be expressed as
\begin{multline*}
  \sum_{k=c^-}^{c^+}(\Delta_{w}^{\mathrm{UR}} + (c^+ - k)\cdot U_w)
+\sum_{i=-w}^{w}\sum_{j=c^-}^{c^+}\sum_{k=c^-}^{c^+}[(w-i)+|k-j|]
 \\
+ \sum_{k=c^-}^{c^+}(\Delta_{w}^{\mathrm{DR}} + (k - c^-)\cdot|D_w|
)  .
\end{multline*}
We can transform this into
\begin{multline}\label{eq:dist}
 c\cdot(\Delta_{w}^{\mathrm{UR}}+\Delta_{w}^{\mathrm{DR}}+U_w\cdot c^+ -D_w\cdot c^-)
+  c^-\cdot(D_w-U_w)\cdot\left((c+1)\bmod 2\right) \\
+
\left(c^2w+\frac{c^3-c}{3}\right)\cdot(2w+1)-\frac{c^3-c}{6}
,
\end{multline}
which, obviously, depends only on the parameters $w$, $c$,
$\Delta_{w}^{\mathrm{UR}}$, $\Delta_{w}^{\mathrm{DR}}$, $U_w$, and
$D_w$ (the two parameters $\Delta_{w}^{\mathrm{UL}}$,
$\Delta_{w}^{\mathrm{DL}}$ are needed if we consider a column with
negative index). We denote the expression \eqref{eq:dist} by $\dist(w,
c,\Delta_{w}^{\mathrm{UR}}, \Delta_{w}^{\mathrm{DR}},\allowbreak
\Delta_{w}^{\mathrm{UL}},\allowbreak \Delta_{w}^{\mathrm{DL}},\allowbreak U_{w},\allowbreak D_{w})$ and
state the recursion for the cost function:
\begin{multline}\label{eq:recursion}
 \cost(w,c,\Delta_{w}^{\mathrm{UR}},\ldots,\Delta_{w}^{\mathrm{DL}},U_{w},D_{w}) \\
 = \min_{c_{-w}\geq c} \{ \cost(-w, c_{-w},
\Delta_{-w}^{\mathrm{UR}},\ldots,
\Delta_{-w}^{\mathrm{DL}},U_{-w},D_{-w}) \}  \\
 + \dist(w,c,\Delta_{w}^{\mathrm{UR}},\ldots,
\Delta_{w}^{\mathrm{DL}},U_{w},D_{w})
\end{multline}
By Lemma~\ref{lem:rectangle} it suffices to consider only previous
solutions with $c_{-w}\geq c$. In the step before, we considered the
rectangle $R_{-w}=[-w,w-1]\times[c_{-w}^+,c_{-w}^-]$. Hence, the
parameters with index $-w$ can be computed from the parameters with
index $w$ as follows:
\begin{align*}
U_{-w} & =  U_{w} - 2w\cdot(c_{-w}^+ - c^+), \\
D_{-w} & =  D_{w} - 2w\cdot(c^- - c_{-w}^-), \\
\Delta_{-w}^{\mathrm{UR}} & =   \Delta_{w}^{\mathrm{UR}} -
\sum_{i=-w}^{w}\sum_{j=c^+ + 1}^{c_{-w}^+} \left[(w-i) + (j-c^+)\right]
 \\ & \qquad
- \left[ U_w - U_{-w} \right]\cdot (c_{-w}^+ - c^+ + 1), \\
\Delta_{-w}^{\mathrm{DR}} & =   \Delta_{w}^{\mathrm{DR}} -
\sum_{i=-w}^{w}\sum_{j=c^- - 1}^{c_{-w}^-} \left[(w-i) + (c^- - j)\right] \\
 & \qquad - \left[ D_w - D_{-w} \right]\cdot (c^- - c_{-w}^- + 1). \\
\end{align*}
The parameters $\Delta_{-w}^{\mathrm{UL}}$ and
$\Delta_{-w}^{\mathrm{DL}}$ can be computed analogously and the cost
function is initialized as follows:
$$
\cost(0,c,0,0,0,0,0,0) =
\begin{cases}
\frac{c^3-c}{6}, & \text{if } 0 \le c \le 2\sqrt{n}+5, \\
\infty, & \mathrm{otherwise.}
\end{cases}
$$
The bound on $c$ has been shown in Lemma \ref{bounded-width}.

\begin{theorem}
\label{running}
An optimal $n$-town can be computed by dynamic programming in
$O(n^{15/2})$ time.
\end{theorem}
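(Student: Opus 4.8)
The plan is to bound the running time as (number of table entries) $\times$ (time to compute one entry via~\eqref{eq:recursion}), and to verify that this product is $O(n^{15/2})$.

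First I would determine the range of each parameter, so that the number of entries is at most the product of these ranges. By Lemma~\ref{bounded-width} an optimal town fits in a square of side $O(\sqrt n)$, hence the column index $w$ and the height $c=c_w$ each run over $O(\sqrt n)$ values, and the counts $U_w,D_w$ of points above and below $R_w$ are each at most $n$. Applying the same square to the distance parameters, every one of the at most $n$ points is within $L_1$-distance $O(\sqrt n)$ of a corner of $R_w$, so $\Delta_w^{\mathrm{UR}}$ and $\Delta_w^{\mathrm{DR}}$ each lie in $\{0,\dots,O(n^{3/2})\}$; the remaining two parameters $\Delta_w^{\mathrm{UL}},\Delta_w^{\mathrm{DL}}$ are not needed for nonnegative columns and, via the reflection trick, never have to be stored. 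The product of the six ranges is
\[
  O(\sqrt n)\cdot O(\sqrt n)\cdot O(n)\cdot O(n)\cdot O(n^{3/2})\cdot O(n^{3/2})=O(n^{6}),
\]
so the table contains $O(n^{6})$ entries.

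Next I would bound the work per entry. Computing~\eqref{eq:recursion} for a fixed target amounts to a minimum over $c_{-w}\in\{c,\dots,O(\sqrt n)\}$, i.e.\ over $O(\sqrt n)$ predecessors. For each one the predecessor's parameters are read off from the update formulas; the term $\dist(\cdot)$ is the closed form~\eqref{eq:dist} and costs $O(1)$, and the only laborious piece is the correction $\sum_{i=-w}^{w}\sum_{j=c^{+}+1}^{c_{-w}^{+}}[(w-i)+(j-c^{+})]$, which records the contribution of the horizontal strip uncovered as the height falls from $c_{-w}$ to $c$. That strip occupies $O(\sqrt n)$ columns and $c_{-w}-c$ rows, so evaluated directly it costs $O(\sqrt n\,(c_{-w}-c))$; summed over all predecessors this is $O(\sqrt n)\cdot\sum_{k=0}^{O(\sqrt n)}k=O(n^{3/2})$ per entry. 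Multiplying, the algorithm runs in $O(n^{6})\cdot O(n^{3/2})=O(n^{15/2})$ time.

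The step I expect to require the most care is the range bookkeeping for the $\Delta$-parameters: the bound $O(n^{3/2})$ leans entirely on the $O(\sqrt n)$ diameter from Lemma~\ref{bounded-width}, and I must also confirm that the stored parameters together with the chosen $c_{-w}$ fix the predecessor entry uniquely, so that~\eqref{eq:recursion} is well posed. This follows because the placement of Lemma~\ref{lem:symmetry} determines $c^{+}$ and $c^{-}$ from $c$ up to a fixed parity offset, leaving every update formula with a single determinate value. For completeness I would observe that both corrective sums are arithmetic progressions, so collapsing them to closed form reduces the per-entry cost to $O(\sqrt n)$ and the total to $O(n^{13/2})$; the stated $O(n^{15/2})$ bound, however, already falls out of the direct estimate.
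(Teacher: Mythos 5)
There is a genuine gap, and it sits exactly where the paper's main idea is needed. You drop $\Delta_w^{\mathrm{UL}}$ and $\Delta_w^{\mathrm{DL}}$ from the state, claiming that via the reflection trick they ``never have to be stored.'' That is not what the reflection trick does. The dynamic program adds columns alternately on the right and on the left (order $0,-1,1,-2,2,\ldots$), and the cost of attaching a column of negative index is computed from the distance sums to the \emph{left} corners of the current rectangle; the reflection trick merely swaps the roles of left and right at each step (as the appendix program does --- note that its dictionary key still contains all four quantities \texttt{D\_up\_right}, \texttt{D\_down\_right}, \texttt{D\_up\_left}, \texttt{D\_down\_left}), it does not eliminate them. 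Nor can the left sums be reconstructed from the stored data: for the $U_w$ points above $R_w=[-w,w]\times[c^-,c^+]$ one has
\begin{equation*}
\Delta_w^{\mathrm{UL}}-\Delta_w^{\mathrm{UR}}=2\sum_{p\text{ above }R_w} p_x ,
\end{equation*}
and this first moment of the $x$-coordinates is not a function of $(w,c,U_w,D_w,\Delta_w^{\mathrm{UR}},\Delta_w^{\mathrm{DR}})$. Two partial solutions agreeing on your six parameters but with different $\Delta^{\mathrm{UL}}$ have different completion costs, so your recursion is not well posed and the $O(n^6)$ entry count is unjustified.

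Once the two missing parameters are restored, the naive product of ranges is $O(\sqrt n)\cdot O(\sqrt n)\cdot O(n)\cdot O(n)\cdot O(n^{3/2})^4=O(n^{9})$ states, which with $O(\sqrt n)$ choices of $c_{-w}$ per state gives only $O(n^{9.5})$ --- worse than the claimed bound. The step your proposal is missing is the paper's use of the approximate symmetry (Lemma~\ref{lem:symmetry}) to prune the parameter space: reflecting $R_w$ about its horizontal symmetry axis shows that the points above and below differ by at most one per column, hence $\abs{U-D}\le C_{\max}=O(\sqrt n)$, so only $O(n^{3/2})$ pairs $(U,D)$ matter; the same reflections give $\abs{\Delta^{\mathrm{UL}}-\Delta^{\mathrm{DL}}},\ \abs{\Delta^{\mathrm{UR}}-\Delta^{\mathrm{DR}}},\ \abs{\Delta^{\mathrm{UL}}-\Delta^{\mathrm{UR}}},\ \abs{\Delta^{\mathrm{DL}}-\Delta^{\mathrm{DR}}}=O(n)$, so only $O(n^{3/2})\cdot O(n)^3=O(n^{9/2})$ quadruples of $\Delta$'s matter. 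This yields $O(n^{7})$ relevant $8$-tuples and, with $O(\sqrt n)$ transitions each (each transition is $O(1)$ since the correction sums are closed-form, as you note), the stated $O(n^{15/2})$. Your per-entry analysis is fine; it is the state-space count that cannot be repaired without this symmetry argument.
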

\begin{proof}
  We have to fill an eight-dimensional array
  $\cost(w,c,\Delta^\mathrm{UR},\Delta^\mathrm{DR},\Delta^\mathrm{UL},$
  $\Delta^\mathrm{DL},U,D)$.  Let
  $C_{\max}$ denote the maximum number of occupied rows and columns in
  an optimum solution.  By Lemma~\ref{bounded-width}, we know that
  $C_{\max}=O(\sqrt n)$.

  The indices $w$ and $c$ range over an interval of size
  $C_{\max}=O(\sqrt n)$. Let us consider a solution for some fixed $w$
  and $c$.  The parameters $U$ and $D$ range between 0 and~$n$.  However,
  we can restrict the difference between $U$ and $D$ that we have to
  consider: If we reflect the rectangle
  $R=[-w,w]\times[c^-,c^+]$  about its horizontal
  symmetry axis, the $U$ points above $R$ and the $D$ points below $R$
  will not match exactly, but in each column, they differ by at most
  one point, according to Lemma~\ref{lem:symmetry}. It follows that $\abs
  {U-D}\le C_{\max}=O(\sqrt n)$.  (If the difference is larger, such a
  solution can never lead to an optimal $n$-town, and hence we need
  not explore those choices.)  In total, we have to consider
  only $O(n\cdot\sqrt n)=O(n^{3/2})$ pairs $(U,D)$.

  The same argument helps to reduce the number of quadruples
  $(\Delta^\mathrm{UL},\Delta^\mathrm{UR},$ $\Delta^\mathrm{DL},\Delta^\mathrm{DR})$.
  Each $\Delta$-variable
  can range between 0 and $n\cdot 2C_{\max}= O(n^{3/2})$.
  However, when reflecting around the horizontal symmetry axis of $R$, each of the
at most $D_{\max}$ differing points contributes at most $2C_{\max}=
O(\sqrt n)$ to the difference between the distance sums
  $\Delta^\mathrm{UL}$ and $\Delta^\mathrm{DL}$. Thus we have
  $\abs{\Delta^\mathrm{UL}-\Delta^\mathrm{DL}}\le
C_{\max} \cdot 2C_{\max}=O(n)$, and similarly,
  $\abs{\Delta^\mathrm{UR}-\Delta^\mathrm{DR}}=O(n)$.

By a similar argument, reflecting about
 the vertical symmetry axis of $R$, we conclude that
  $\abs{\Delta^\mathrm{UL}-\Delta^\mathrm{UR}}=O(n)$ and
$\abs{\Delta^\mathrm{DL}-\Delta^\mathrm{DR}}=O(n)$. In summary, the
total number of quadruples
  $(\Delta^\mathrm{UL},\Delta^\mathrm{UR},\Delta^\mathrm{DL},\Delta^\mathrm{DR})$
that the algorithm has to consider is $O(n^{3/2})\cdot O(n)\cdot
O(n)\cdot O(n)=O(n^{9/2})$. In total, the algorithm processes
$O(\sqrt n)\cdot O(\sqrt n)\cdot O(n^{3/2})\cdot O(n^{9/2})=O(n^7)$
8-tuples. For each 8-tuple, the recursion~\eqref{eq:recursion} has
to consider at most $C_{\max}= O(\sqrt n)$ values $c_{-w}$, for a
total running time of $O(n^{15/2})$.  \qed
\end{proof}

\section{\boldmath $n$-Towns, Cities, and $n$-Block Cities}\label{sec:townscitiesblockcities}
 For large values of $n$,
 $n$-towns converge
towards the continuous weight distributions of {\em cities}.
However, the arrangement of buildings in many cities are discretized
in a different sense: An {\em{$n$-block city}} is the
union of $n$ axis-aligned unit squares (``city blocks''),
see Fig.~\ref{disctown} below. In the following,
we discuss $n$-block cities, and
we discuss the relation between $n$-towns and $n$-block
cities.



For a planar region $R$, let $c'(R)$ be the integral of 
Manhattan distances between all point pairs in $R$:
$$c'(R) :=
\int_{p\in R}
\int_{q\in R} \norm{p-q}\, dp\, dq
$$
  When $R$ has
area 1, this is the expected distance between two random points
in~$R$.
Scaling a shape $R$ by a factor of $d$ increases the total cost by a
factor of $d^5$, i.e., by a factor of $A^{2.5}$ for an area of $A$.
This motivated \cite{Bender03whatis} to use the expression
$D(R):=\frac{c'(R)}{A(R)^{2.5}}$ 
as a scale-independent measure for the
quality of the shape of a city. For example, a square $Q$ of any side length $a$ gets the same
value
\begin{equation*}
D(Q)=
\frac{1}{a^{5}} \left(
a^2\cdot\!
 \int_0^a\!\! \int_0^a\! |x_1-x_2|\, dx_2 dx_1+
a^2\cdot\!
 \int_0^a\!\! \int_0^a\! |y_1-y_2|\, dy_2 dy_1\right) =
\frac{2}{3} \; .
\end{equation*}
A circle $C$ yields $D(C)=\frac{512}{45\pi^{2.5}}\approx 0.6504$ and
the optimal shape achieves a value of $\psi=0.650\,245\,952\,951\ldots$

We will consider $n$-block cities $Q(S)$ consisting of unit
squares (``blocks'') centered
at a set of $n$ grid points $S \subset \mathbb{Z} \times
\mathbb{Z}$.
 We denote a unit square centered at point
$s=(s_x,s_y)$ by $Q(s) = [s_x-\frac12,s_x+\frac12]
\times [s_y-\frac12,s_y+\frac12]$, and then we have
$$Q(S) := \bigcup_{s\in S} Q(s).$$
The average distance $D(Q(S))$ of an $n$-block city can be decomposed
into average distances between blocks:
$$c'(Q(S))
=
\int\limits_{p\in Q(S)}
\int\limits_{q\in Q(S)} \norm{p-q}_1\, dp\, dq
=
\sum_{s \in S}
\sum_{t \in S}
\int\limits_{p\in Q(s)}
\int\limits_{q\in Q(t)} \norm{p-q}_1\, dp\, dq
$$
Using the notation
$$
d'(s) :=
\int_{p\in Q(s)}
\int_{q\in Q(0)} \norm{p-q}_1\, dp\, dq
,$$
we can express this as
\begin{equation}
  \label{eq:c'}
c'(Q(S))
=
\sum_{s \in S}
\sum_{t \in S}
d'(s-t)
.
\end{equation}
The average distance $d'(s)$ between two square
blocks at an offset $s$ can be expressed as follows: If the two blocks
don't lie in the same row or column ($s_x\ne0$ and $s_y\ne0$), the
average distance is simply the distance $\norm s _1$ between the
centers, since positive and negative
deviations from the block centers average out.
When two blocks lie in the same column, then the $y$-distances
average out to the $y$-distance between the centers, but the average $x$-distance is not the
$x$-distance between the centers (which would be 0), but
$
\int_{-1/2}^{+1/2}
\int_{-1/2}^{+1/2}
 |x_1-x_2|\, dx_2 \,dx_1 =\frac13$.
Similarly, for two blocks in the same row, we must add $\frac13$ to
the distance $\norm s _1$ between the centers. Finally, for two
identical blocks, we have already seen that the average distance is
$\frac23$. We can express this compactly as
\begin{align*}
d'(s) = d'((s_x,s_y))
&=
\abs {s_x} + \tfrac13[s_x\ne 0]
+
\abs {s_y} + \tfrac13[s_y\ne 0]
\\&
=
\norm s _1 + \tfrac13[s_x\ne 0]
+\tfrac13[s_y\ne 0]
,
\end{align*}
where the notation $[s_x\ne 0]$ is 1 if the predicate
$s_x\ne 0$ holds and 0 otherwise.
With these conventions, the expression
\eqref{eq:c'}
for the cost $c'(Q(S))$ of an $n$-block city $Q(S)$
looks very similar
to
\eqref{eq:c}
for the cost $c(S)$ of a town $S$,
except for
the correction terms $\frac 13$ in the summands and for the factor
$\frac12$.
The factor $\frac12$ accounts for the fact that in the sum $c(S)$ of a
town, each {pair} of (distinct) points is counted once, whereas in the
integral $c'(R)$, each pair of points is considered twice, as two
ordered pairs.
To make these expressions better comparable, we introduce the factor
$\frac12$ and define 
\begin{equation}
\nonumber 
 c_{\mathrm{city}}
(S) :=
\frac 12 \cdot c'(Q(S))
=
\frac12\cdot
\sum_{s \in S}
\sum_{t \in S}
d'(s-t)
.
\end{equation}

The new ``distance'' $d'$ is not a norm (for example, $d'(0)\ne0$),
but 
the properties from Section~\ref{sec:prelim} remain true for this
new objective function.  Thus, the dynamic programming formulation of
Section~\ref{sec:optimal} can be adapted to compute optimal $n$-block
cities.
As we shall now demonstrate,
all lemmas of Section~\ref{sec:prelim} hold verbatim
for $n$-block cities, except that the formula for
the change incurred by moving a single block to a different place
(Lemma~\ref{lem:changecostfunc}) must be adapted:
\begingroup
\renewcommand\thetheorem{\ref{lem:changecostfunc}$\mathit{'}$}
\begin{lemma}
  Let $S\subset \mathbb{Z}^2$
be the set of centers of an $n$-block city, $t\in S$ and $r\notin S$. Then,
$$
c_{\mathrm{city}}( (S\setminus t)\cup r) = c_{\mathrm{city}}(S) - c_{\mathrm{city}}(t,S) + c_{\mathrm{city}}(r,S) - d'(r-t),
$$
where
$c_{\mathrm{city}}(p,S):= \sum_{q\in S} d'(p-q)$.
\end{lemma}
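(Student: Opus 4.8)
The plan is to mirror the proof of Lemma~\ref{lem:changecostfunc} exactly, replacing the norm $\|\cdot\|_1$ by the modified distance $d'$ throughout and tracking the factor $\frac12$. The statement to prove is an identity, so the approach is a direct computation: start from the definition $c_{\mathrm{city}}(S') = \frac12 \sum_{s\in S'}\sum_{t\in S'} d'(s-t)$ and expand it for $S' = (S\setminus t)\cup r$, comparing term by term with $c_{\mathrm{city}}(S)$.

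First I would decompose the double sum over $S' = (S\setminus t)\cup r$ according to which summation indices equal $r$ and which lie in $S\setminus t$. Relative to the double sum defining $c_{\mathrm{city}}(S)$, passing to $S'$ amounts to two changes: we delete all ordered pairs involving $t$ and we add all ordered pairs involving $r$. The pairs involving $t$ are $\{(t,q): q\in S\}$ together with $\{(p,t): p\in S\}$, which by symmetry of $d'$ contribute $2\sum_{q\in S} d'(t-q) = 2\,c_{\mathrm{city}}(t,S)$ to the unscaled double sum; here I would note that the diagonal term $d'(t-t)=d'(0)$ is counted once in each of these two collections, so it appears twice, matching the $2c_{\mathrm{city}}(t,S)$ bookkeeping. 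Symmetrically, inserting $r$ adds $2\,c_{\mathrm{city}}(r,S')$ where $S' = (S\setminus t)\cup r$; the key observation is that $r$ should be measured against the \emph{new} point set, so after accounting for the removal of $t$ and including the self-term $d'(0)$ for the pair $(r,r)$, the added contribution is $2\,c_{\mathrm{city}}(r,S) - 2\,d'(r-t) + 2d'(0)$ minus the correction already absorbed. Multiplying the net change by $\frac12$ then yields the claimed formula $-c_{\mathrm{city}}(t,S) + c_{\mathrm{city}}(r,S) - d'(r-t)$.

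The only genuine subtlety, and the step I expect to require the most care, is the treatment of the self-distance $d'(0)=\frac23$, which is nonzero here unlike the norm case where $\|t-t\|_1=0$. In the original Lemma~\ref{lem:changecostfunc} the diagonal pairs vanish, so one never worries about whether $c(t,S)$ includes the $t$-to-$t$ term. Under the convention $c_{\mathrm{city}}(p,S)=\sum_{q\in S} d'(p-q)$, the sum $c_{\mathrm{city}}(t,S)$ \emph{does} include the term $d'(0)$ since $t\in S$, while $c_{\mathrm{city}}(r,S)$ does \emph{not} (since $r\notin S$). I would verify that these two facts conspire correctly: the $(t,t)$ self-pair present in $c_{\mathrm{city}}(S)$ but absent in $c_{\mathrm{city}}(S')$ is exactly accounted for by the inclusion of $d'(0)$ in $c_{\mathrm{city}}(t,S)$, while the new $(r,r)$ self-pair in $c_{\mathrm{city}}(S')$ is supplied by rewriting $c_{\mathrm{city}}(r,S')$ in terms of $c_{\mathrm{city}}(r,S)$ and the correction $-d'(r-t)$. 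A clean way to sidestep the casework is to write the identity as
\begin{equation*}
c_{\mathrm{city}}(S') - c_{\mathrm{city}}(S)
= \frac12\Bigl(\textstyle\sum_{s,t\in S'} - \sum_{s,t\in S}\Bigr) d'(s-t),
\end{equation*}
expand the difference of double sums symbolically, and let the $d'(0)$ terms cancel automatically; this makes the computation robust against the off-by-one error that the nonzero diagonal invites.

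Everything else is routine, since the proof of the original lemma uses only the symmetry $d'(s-t)=d'(t-s)$ and linearity of the sum, both of which $d'$ satisfies. Hence I would present the argument as a short verification that the only structural change from Lemma~\ref{lem:changecostfunc} is the correct handling of the nonzero self-distance, emphasizing that no property specific to the $L_1$ norm is used.
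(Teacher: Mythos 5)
Your overall strategy is the same as the paper's one-line proof (redo Lemma~\ref{lem:changecostfunc} with $d'$ in place of the norm and check that the self-distances work out), but your bookkeeping fails at exactly the step you yourself flag as the crucial one. The ordered pairs involving $t$ do \emph{not} contribute $2c_{\mathrm{city}}(t,S)$ to the unscaled double sum: the two collections $\{(t,q):q\in S\}$ and $\{(p,t):p\in S\}$ overlap in the single pair $(t,t)$, which occurs only \emph{once} in $\sum_{p\in S}\sum_{q\in S}d'(p-q)$, whereas it is counted twice in $2c_{\mathrm{city}}(t,S)$; the correct contribution is $2c_{\mathrm{city}}(t,S)-d'(0)$. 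Likewise the pairs involving $r$ in the new double sum contribute $2\bigl(c_{\mathrm{city}}(r,S)-d'(r-t)\bigr)+d'(0)$, not the quantity you write, and the phrase ``minus the correction already absorbed'' is doing unacknowledged work: the only value of that correction which produces the claimed formula is inconsistent with the actual double sums.

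In fact, carrying out your own suggested ``clean'' computation (expanding the difference of the two double sums) gives
\begin{equation*}
c_{\mathrm{city}}\bigl((S\setminus t)\cup r\bigr)-c_{\mathrm{city}}(S)
= c_{\mathrm{city}}(r,S)-c_{\mathrm{city}}(t,S)-d'(r-t)+d'(0),
\end{equation*}
with a leftover $+d'(0)=+\tfrac23$ relative to the statement. So the identity, read literally with the definition $c_{\mathrm{city}}(p,S)=\sum_{q\in S}d'(p-q)$ (under which $c_{\mathrm{city}}(t,S)$ contains the term $d'(0)$), is not an identity at all: for $S=\{t\}$ the left side is $\tfrac12 d'(0)=\tfrac13$ while the right side is $\tfrac13-\tfrac23+d'(r-t)-d'(r-t)=-\tfrac13$. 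The formula becomes exact either by adding $d'(0)$ on the right or by reading $c_{\mathrm{city}}(t,S)$ as the sum over $q\in S\setminus\{t\}$, i.e., with the self-pair excluded; the constant offset is harmless for the downstream convexity argument, since $d'(r-t)\ge\tfrac43>d'(0)$ for $r\ne t$. (The paper's own proof, which merely asserts that the self-distances ``are correctly accounted for,'' glosses over the same point.) The genuine gap in your proposal is therefore that the verification you claim to perform cannot be completed: done honestly, it refutes the statement as written rather than proving it, and the correct outcome of your computation would have been to identify the needed $+d'(0)$ correction or the needed reinterpretation of $c_{\mathrm{city}}(t,S)$.
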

\endgroup
The proof is the same as for
Lemma~\ref{lem:changecostfunc}. One can easily check
that the distance $d'$ from $t$ to itself
and the distance from $r$ to itself
are correctly accounted for.
\qed

Convexity of the optimum solution (Lemma~\ref{lem:convex}) holds
true for $n$-block cities. The proof goes through almost verbatim.
The expression $d'(p)$ is not a norm, but it is a convex function of
$p$, and this is all that is needed.
Approximate symmetry of the optimal solution (Lemma~\ref{lem:symmetry})
remains true. The calculations (which have not been shown in detail
anyway), are modified, but the conclusion is the same.

When comparing an $n$-town $S$ and a corresponding $n$-block city
$Q(S)$, we have to add 1/6 for each pair of blocks that are in the
same column, and for each pair of blocks that are in the same row.
 Using the notation $r_j$ and $c_i$ of Section~\ref{sec:optimal}
for row and column lengths,
and writing
$c_{\textrm{town}}(S)$ instead of $c(S)$ for improved clarity,
 we get thus:
\begin{equation}
  \label{eq:lambda}
\textstyle
c_{\textrm{city}}(S)-c_{\textrm{town}}(S) =
\Lambda(S):=\frac{1}{6} (\sum_i c_i^2 +\sum_j r_j^2)
\end{equation}
This {\em adjustment term} accounts for the discretization effect.
For example, a $1$-town has an average distance of 0, as all the weight is
concentrated in a single point, while a $1$-block city has
 an average distance of $2/3$,
just like any other square,
 (and a $c_{\textrm{city}}$ value of $1/3$).

Using this expression, it is easy to show that the bound of 2 on the
 aspect ratio holds for $n$-block cities in the same form as in
 Lemma~\ref{relative-bound-width}.
The proof of Lemma~\ref{relative-bound-width} establishes that
$c_{\textrm{town}}(S)$ decreases when a top-most point $t$ from the
longest column of height $h$ is added to the right of the
 longest row of length $w$.
For an $n$-block city, the
adjustment term
$\Lambda(S)$ decreases by at least $h^2-(h-1)^2$ when $t$ is removed,
and it increases by $(w+1)^2-w^2 + 1^2$ when $r$ is added.
Under the assumption of the proof ($w\le h/2-3$), the total change
of $\Lambda$
is negative, and the modified solution is an improvement also when
$S$ is regarded as an $n$-block city.

From Lemma~\ref{relative-bound-width} we conclude that the bound of
$2\sqrt n+5$ on the height and width of optimal $n$-block cities
(Lemma~\ref{bounded-width}) holds as well.  (The argument of the proof
of Lemma~\ref{bounded-width} is purely geometric: it is based on
convexity and does not use the objective function.)

%
Thus, we conclude that the adjustment term~\eqref{eq:lambda} is
asymptotically bounded by $\Lambda(S) = \Theta(n^{1.5})$.

When considering the continuous weight distributions of $n$-block
cities, we have to account twice for each pair of discrete block
centers; hence, the appropriate measure for the quality of an
$n$-town is $\Phi(S)={2c(S)}/{n^{2.5}}$. The measure for the
corresponding $n$-block city is
$\Psi(S)={2(c_{\mathrm{town}}(S)+\Lambda(S))}/{n^{2.5}}$. Thus, the relative
difference is $\Theta(\frac{1}{n})$.

\begin{figure}[t]
\begin{center}
\input{discretetown.pstex_t} \caption
{Values of $\Phi$ and $\Psi$ for 
optimal $n$-block cities
 for $n=2,\ldots,21$.  All optimal solutions
    are shown, up to symmetries.
These are simultaneously shapes of optimal $n$-towns,
 but 
 for $n=3,11,15,17,18,19$, there are additional
 $n$-towns that are tied for the optimum
(with the same value $\Psi$), cf.~Figure~\ref{fig:smalltowns}.
}
 \label{disctown}
\end{center}
\end{figure}

\begin{figure}[t]
\begin{center}
  \includegraphics[width=.73\columnwidth]{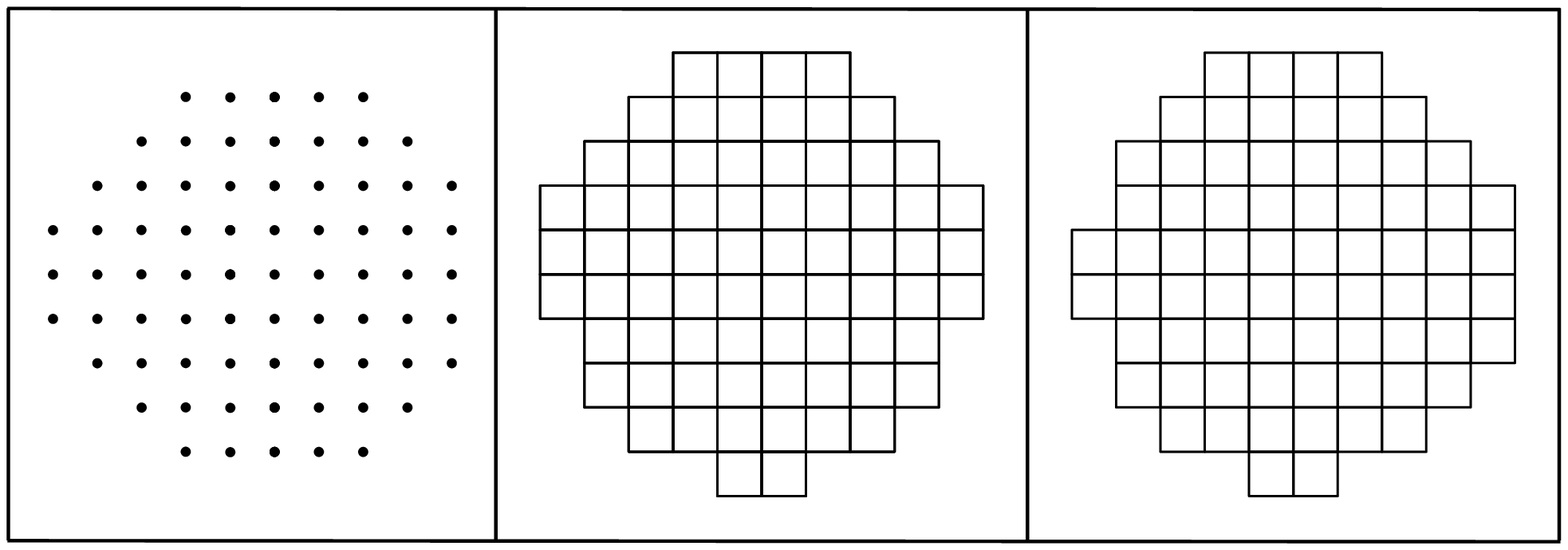}
  \caption{The optimal $n$-town and the two optimal $n$-block cities for $n=72$.}
\label{fig:72}
\end{center}
\end{figure}

In Figure~\ref{disctown}, we show the corresponding values for the
small examples from Figure~\ref{fig:smalltowns}.
Figure~\ref{optsols} shows results for some larger values $n$.
One can observe how $\Phi(S)$ and $\Psi(S)$ converge from below and above
towards the optimal city value $\psi$ of about 0.650\,245\,952\,951. 
Note that convergence is not monotone, neither for $\Phi$ nor for $\Psi$.

\section{Computational Results}
\label{sec:results}

Optimal $n$-towns and optimal $n$-block cities do not necessarily
have the same shape. For $n\le 21$, a comparison of
Figures~\ref{fig:smalltowns} and~\ref{disctown} shows that, while
not all optimal $n$-towns are optimal $n$-block cities, every
optimal $n$-block city is simultaneously an optimal $n$-town.
However, this is not always true. 
In fact, for $n=72$, there are two shapes for
 optimal $n$-block cities, none of which is optimal for an $n$-town,
see Figure~\ref{fig:72}.
This is the only instance of this phenomenon up to $n=80$, but we
surmise it will be more and more frequent for larger~$n$.
We have more comments on this phenomenon at the end of this section.

We have calculated the optimal costs
$c_{\textrm{town}}$ and
$c_{\textrm{city}}$ up to $n=80$ points. The results are shown in
Table~\ref{table:smalltowns}. When there are several optimal solutions
(except symmetries), this is indicated by a star,
together with the multiplicity.

\begingroup
\let\oldarrarystretch = \arraystretch
\renewcommand{\arraystretch}{1.065}
\begin{table}
\centering
{\footnotesize
\begin{tabular}{|r|rr|rr|r||r|rr|rr|r|}
\hline
$n$& $c_{\textrm{town}}$& \ $E_1$ & $c_{\textrm{city}}$& \ \ \ $3E_2$ & $E_3$
& $n$& $c_{\textrm{town}}$& \ \ $E_1$ & $c_{\textrm{city}}$& \ \ \ \ \ $3E_2$ & $E_3$
\\\hline
         1 &     0  &$   0$&      0\rlap{$\,^1\!/_3$}&$   0$ 
        &$   1$
& 41 &  3446  &$   1$&   3530\rlap{$\,^1\!/_3$}&$   2$ 
&$   1$
\\
 2 &     1  &$   0$&      2\rlap{$$}&$   0$ 
&$   1$
& 42 &  3662  &$   1$&   3749\rlap{$$}&$   3$ 
&$   0$
\\
 3 &     4\rlap{$^{*(2)}$}&$   0$&      5\rlap{$\,^2\!/_3$}&$   1$ 
&$   1$
& 43 &  3886  &$   2$&   3976\rlap{$$}&$   5$ 
&$   0$
\\
 4 &     8  &$   0$&     10\rlap{$\,^2\!/_3$}&$  -1$ 
&$   1$
& 44 &  4112  &$  -3$&   4205\rlap{$\,^1\!/_3$}&$ -10$ 
&$   0$
\\
 5 &    16\rlap{$^{*(2)}$}&$   1$&     19\rlap{$\,^2\!/_3{}^{*(2)}$}&$   1$ 
&$   1$
& 45 &  4360\rlap{$^{*(2)}$}&$   6$&   4456\rlap{$\,^2\!/_3$}&$  17$ 
&$   1$
\\
 6 &    25  &$   0$&     30\rlap{$$}&$  -1$ 
&$   1$
& 46 &  4612\rlap{$^{*(2)}$}&$  10$&   4712\rlap{${}^{*(2)}$}&$  31$ 
&$   1$
\\
 7 &    38  &$   0$&     44\rlap{$$}&$   0$ 
&$   1$
& 47 &  4868\rlap{$^{*(2)}$}&$  11$&   4970\rlap{$\,^1\!/_3{}^{*(2)}$}&$  29$ 
&$  -2$
\\
 8 &    54\rlap{$^{*(2)}$}&$   0$&     61\rlap{$\,^1\!/_3{}^{*(2)}$}&$   0$ 
&$   1$
& 48 &  5128  &$   7$&   5234\rlap{$$}&$  18$ 
&$  -1$
\\
 9 &    72  &$  -1$&     81\rlap{$$}&$  -3$ 
&$   2$
& 49 &  5398  &$   4$&   5507\rlap{$\,^1\!/_3$}&$  11$ 
&$  -1$
\\
10 &    96  &$   0$&    106\rlap{$\,^1\!/_3$}&$   0$ 
&$   1$
& 50 &  5675  &$   1$&   5788\rlap{$$}&$   0$ 
&$   0$
\\
\hline
11 &   124\rlap{$^{*(4)}$}&$   2$&    135\rlap{$\,^2\!/_3{}^{*(2)}$}&$   3$ 
&$   0$
& 51 &  5960  &$  -4$&   6076\rlap{$\,^1\!/_3$}&$ -13$ 
&$   0$
\\
12 &   152  &$  -1$&    165\rlap{$\,^1\!/_3$}&$  -4$ 
&$   1$
& 52 &  6248  &$ -14$&   6368\rlap{$$}&$ -43$ 
&$   1$
\\
13 &   188  &$   0$&    203\rlap{$$}&$  -1$ 
&$   1$
& 53 &  6568  &$  -1$&   6691\rlap{$\,^2\!/_3$}&$  -4$ 
&$   1$
\\
14 &   227  &$   0$&    244\rlap{$$}&$  -1$ 
&$   1$
& 54 &  6890  &$   5$&   7017\rlap{$\,^1\!/_3$}&$  15$ 
&$   2$
\\
15 &   272\rlap{$^{*(2)}$}&$   1$&    290\rlap{$\,^2\!/_3$}&$   3$ 
&$   1$
& 55 &  7222\rlap{$^{*(2)}$}&$  12$&   7352\rlap{$\,^1\!/_3$}&$  35$ 
&$   0$
\\
16 &   318  &$  -1$&    338\rlap{$\,^2\!/_3$}&$  -4$ 
&$   1$
& 56 &  7556\rlap{$^{*(2)}$}&$  13$&   7690\rlap{$$}&$  36$ 
&$   0$
\\
17 &   374\rlap{$^{*(2)}$}&$   1$&    396\rlap{$\,^1\!/_3$}&$   3$ 
&$   0$
& 57 &  7896  &$  10$&   8033\rlap{$\,^2\!/_3$}&$  28$ 
&$   0$
\\
18 &   433\rlap{$^{*(2)}$}&$   2$&    457\rlap{$\,^1\!/_3$}&$   5$ 
&$   0$
& 58 &  8243  &$   5$&   8384\rlap{$\,^2\!/_3$}&$  14$ 
&$   1$
\\
19 &   496\rlap{$^{*(2)}$}&$   2$&    522\rlap{$\,^1\!/_3$}&$   4$ 
&$   0$
& 59 &  8604  &$   4$&   8749\rlap{$\,^1\!/_3$}&$  13$ 
&$   1$
\\
20 &   563  &$   0$&    591\rlap{$\,^2\!/_3$}&$   0$ 
&$   1$
& 60 &  8968  &$  -2$&   9117\rlap{$\,^1\!/_3$}&$  -6$ 
&$   2$
\\
\hline
21 &   632  &$  -5$&    663\rlap{$$}&$ -15$ 
&$   1$
& 61 &  9354  &$   3$&   9506\rlap{$\,^1\!/_3$}&$   9$ 
&$   0$
\\
22 &   716  &$   0$&    749\rlap{$\,^1\!/_3$}&$  -1$ 
&$   1$
& 62 &  9749\rlap{$^{*(2)}$}&$   9$&   9904\rlap{$\,^2\!/_3{}^{*(2)}$}&$  24$ 
&$  -1$
\\
23 &   804\rlap{$^{*(2)}$}&$   2$&    839\rlap{$\,^1\!/_3$}&$   6$ 
&$   1$
& 63 & 10146  &$   7$&  10305\rlap{$\,^1\!/_3$}&$  17$ 
&$  -2$
\\
24 &   895  &$   2$&    933\rlap{$$}&$   7$ 
&$   2$
& 64 & 10556  &$   8$&  10719\rlap{$\,^1\!/_3$}&$  21$ 
&$  -1$
\\
25 &   992  &$   2$&   1032\rlap{$\,^1\!/_3$}&$   6$ 
&$   1$
& 65 & 10972  &$   5$&  11139\rlap{$\,^2\!/_3$}&$  15$ 
&$   0$
\\
26 &  1091  &$  -2$&   1134\rlap{$$}&$  -5$ 
&$   2$
& 66 & 11400  &$   5$&  11571\rlap{$\,^2\!/_3$}&$  14$ 
&$   1$
\\
27 &  1204  &$   2$&   1249\rlap{$$}&$   4$ 
&$   1$
& 67 & 11836\rlap{$^{*(2)}$}&$   3$&  12011\rlap{$\,^2\!/_3$}&$   7$ 
&$   1$
\\
28 &  1318  &$   0$&   1365\rlap{$\,^1\!/_3$}&$  -1$ 
&$   0$
& 68 & 12280  &$  -2$&  12460\rlap{$$}&$  -4$ 
&$   2$
\\
29 &  1442  &$   2$&   1492\rlap{$$}&$   5$ 
&$   1$
& 69 & 12728  &$ -12$&  12912\rlap{$\,^1\!/_3$}&$ -34$ 
&$   3$
\\
30 &  1570  &$   1$&   1622\rlap{$\,^2\!/_3$}&$   4$ 
&$   1$
& 70 & 13209  &$   1$&  13396\rlap{$\,^2\!/_3$}&$   2$ 
&$   1$
\\
\hline
31 &  1704  &$   0$&   1759\rlap{$\,^2\!/_3$}&$   1$ 
&$   2$
& 71 & 13700\rlap{$^{*(3)}$}&$  13$&  13891\rlap{$$}&$  37$ 
&$  -1$
\\
32 &  1840  &$  -6$&   1898\rlap{$\,^2\!/_3$}&$ -16$ 
&$   3$
& 72 & 14193  &$  17$&  14388\rlap{${}^{*(2)}$}&$  49$ 
&$  -1$
\\
33 &  1996  &$   1$&   2057\rlap{$$}&$   4$ 
&$   2$
& 73 & 14690  &$  15$&  14888\rlap{$$}&$  40$ 
&$  -4$
\\
34 &  2153  &$   3$&   2216\rlap{$\,^2\!/_3$}&$   8$ 
&$   1$
& 74 & 15195  &$  11$&  15397\rlap{$\,^1\!/_3$}&$  27$ 
&$  -4$
\\
35 &  2318  &$   5$&   2384\rlap{$$}&$  12$ 
&$   0$
& 75 & 15712  &$   8$&  15918\rlap{$\,^1\!/_3$}&$  17$ 
&$  -4$
\\
36 &  2486  &$   3$&   2554\rlap{$\,^2\!/_3$}&$   6$ 
&$  -1$
& 76 & 16232  &$  -3$&  16442\rlap{$\,^2\!/_3$}&$ -14$ 
&$  -4$
\\
37 &  2656  &$  -5$&   2727\rlap{$\,^2\!/_3$}&$ -16$ 
&$  -1$
& 77 & 16780  &$   4$&  16995\rlap{$$}&$   7$ 
&$  -3$
\\
38 &  2847  &$   1$&   2921\rlap{$\,^2\!/_3$}&$   3$ 
&$   0$
& 78 & 17335  &$   7$&  17554\rlap{$\,^2\!/_3$}&$  18$ 
&$  -2$
\\
39 &  3040  &$   2$&   3117\rlap{$\,^2\!/_3$}&$   5$ 
&$   0$
& 79 & 17904\rlap{$^{*(2)}$}&$  13$&  18128\rlap{$$}&$  37$ 
&$  -2$
\\
40 &  3241  &$   3$&   3322\rlap{$$}&$   9$ 
&$   1$
& 80 & 18478  &$  14$&  18706\rlap{$\,^2\!/_3$}&$  40$ 
&$   0$
\\
\hline
\end{tabular}
}
\caption{Optimal towns $c_{\textrm{town}}$ and $n$-block cities $c_{\textrm{city}}$.
 $^*$ indicates multiple 
 solutions.
}
\label{table:smalltowns}
\end{table}
\let\arrarystretch = \oldarraystretch
\endgroup

It is clear that an optimal $n$-block city is never better than an
optimal continuous city of area $n$ that has a value
$\psi n^{5/2}/2$.
Empirically we found the approximation
$c_{\textrm{city}} \approx
 \psi n^{5/2}/2 + 0.115\cdot n^{3/2}$
with $\psi= 0.650245952951$.
 The order of magnitude of
the ``discretization penalty''
$0.115\cdot n^{3/2} =\Theta(n^{3/2})$
is explained as follows:
changing the continuous city of area $n$ into blocks affects
$\Theta(\sqrt n)$ squares
along the boundary. For each adjustment in one of these squares,
distances to $n$ other squares are affected.

Since $c_{\textrm{city}}$ is a multiple of 1/3, we rounded our estimate to the
nearest multiple of $1/3$ and used the approximation formula
$$\bar c_{\textrm{city}} :=
\lfloor 3\psi n^{5/2}/2 + 0.345\cdot n^{3/2}\rceil/3$$
The notation $\lfloor \cdot\rceil$ denotes rounding to the nearest integer.
Table~\ref{table:smalltowns} shows the error $E_2 := c_{\textrm{city}}-\bar
c_{\textrm{city}}$ of this approximation.  (Actually, the table shows $3E_2$,
which is an integer.)

For $n$-towns, on the other hand, we found the approximation
$c_{\textrm{town}} \approx
 \psi n^{5/2}/2 - 0.205\cdot n^{3/2}$.
So this seems to approximate the
optimal continuous city from below, but we do not have a proof of this fact.

The deviation $E_1$
between $c_{\textrm{town}}$
and its approximation formula
$$\bar c_{\textrm{town}} := \lfloor \psi n^{5/2}/2 - 0.205\cdot n^{3/2}\rceil$$
is shown in
Table~\ref{table:smalltowns}.

Finally, we look at
the difference between $c_{\textrm{city}}$ and $c_{\textrm{town}}$.
For a given point set $S$, it is the quantity $\Lambda$
defined in~\eqref{eq:lambda}.  It is estimated as $0.32\cdot n^{3/2}$.
The table shows the error
 $E_3 := 3\cdot(c_{\textrm{city}}-c_{\textrm{town}})-\lfloor 0.96\cdot n^{3/2}\rceil$.
Apart from the rounding, $E_3$ would equal $3(E_2-E_1)$.

One can see that the error $E_3$ is much smaller than one might
expect from the random-looking fluctuations of $E_1$ and $E_2$.
This can be explained by the fact that the expression~\eqref{eq:lambda}
 for $\Lambda(S)$
is apparently not so sensitive to small deviations of the shape $S$.


\begin{figure}[t]
\begin{center}
\includegraphics[width=0.8\columnwidth]{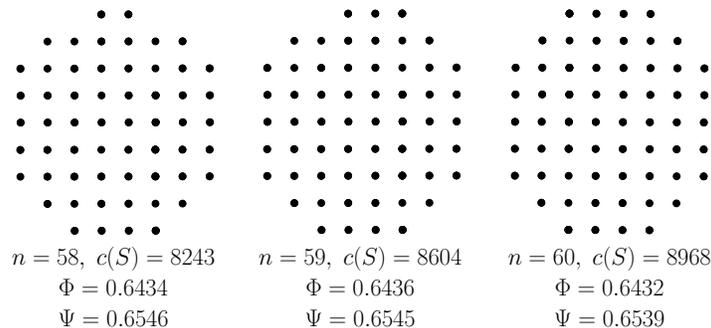} \caption{Optimal
$n$-towns for $n=58,59,60$; these are simultaneously the shapes of
the optimal $n$-block cities.}
 \label{optsols}
\end{center}
\end{figure}

Accordingly, Table~\ref{table:smalltowns} exhibits the tendency
that the deviations of $c_{\textrm{town}}$ and $c_{\textrm{city}}$
``above'' and ``below average'' occur for the same values of~$n$:
$n$-towns and $n$-block cities with the same number $n$ 
 suffer
equally from the effects of discretization.
A glance at the optimal solutions (Figures~\ref{fig:smalltowns}
and~\ref{disctown})
shows that
the costs are
below average when the shapes are highly symmetric, for example
$n=9$, 12, 21, but also $n=60$  (Figure~\ref{optsols}).
On the other hand, when there is no unique ``very good'' shape, one
can expect a greater variation of different solutions that try to come
close to the optimal continuous shape.  Indeed, larger values of $E_1$
and $E_2$ in Table~\ref{table:smalltowns} tend to go hand in hand with
a greater multiplicity of optimal solutions. The worst values of $E_1$
and $E_2$ occur for $n=72$; this is the first value of $n$ where optimal
$n$-block cities and optimal $n$-towns differ (Figure~\ref{fig:72}).
This is probably no coincidence:
when there is a greater variety of solutions that can compete for being
best, the distinction of the objective function between $n$-block
cities and $n$-towns is more likely to make a difference.

\section{Outlook}
\label{sec:outlook}

We have shown that optimal $n$-towns can be computed
in time $O(n^{7.5})$. This is of both theoretical and practical
interest, as it yields a method polynomial in $n$ that also allows extending
the limits of the best known solutions; however,
there are still some ways how the result could be improved.

Strictly speaking, the method is only pseudo-polynomial, as
the input size is $O(\log n)$. It is not clear how
the corresponding output could be described in polylogarithmic
space; any compact encoding would
lead to
 a good and compact
approximation of the optimal (continuous) city curve, for which there is
only a description by a differential equation with no known closed-form solution.
For this reason we are sceptical that a polynomial solution
is possible.


We are more optimistic about lowering the number of parameters
in our dynamic program, and thus the exponent, by exploiting
convexity or
stronger symmetry properties. This may also make it possible to compute
optimal solutions for larger $n$.
One possible avenue could arise if partial solutions
would satisfy some monotonicity property; however,
the unique optimal 9-town is not contained in the unique optimal 12-town.
Thus, there is no way for a town to organically grow
and remain optimal at all times.
Generally,
an optimal $n$-town does not necessarily contain an optimal
 $(n-1)$-town.
(The smallest example occurs for $n=35$,
no optimal 35-town contains an optimal 34-town.)

As discussed in the last section, there is still a variety of questions
regarding the convergence of optimal solutions for growing $n$,
approaching the continuous solution in the limit.
As indicated, we have a pretty good idea how this continuous value is
approximated from below and above by $n$-towns and $n$-block cities;
however, we do not have a formal proof of the lower bound property
of $n$-block cities.

It is easy to come up with good and fast approximation methods:
In the continuous case, even a square is within 2.5\% of the optimal
shape; a circle reaches 0.02\%; consequently, simple greedy heuristics
will do very well. Two possible choices are iteratively adding points
to minimize the total cost, or (even faster) as close as possible to
a chosen center.

As mentioned in the introduction, a closely related, but harder
problem arises when $n$ locations are to be chosen from a given set of
$k>n$ points, instead of the full integer grid. This was studied by \cite{bender08},
who gave a PTAS, but were unable to decide the complexity.
It is conceivable that a refined dynamic-programming approach may
yield a polynomial solution; however, details can expected
to be more involved, so we leave this for future work.
The same holds for other metrics.

Finally, one can consider the problem
in higher dimensions.
A crucial property of our dynamic-programming solution
is that the \emph{interface} between the points in
the columns that have already been constructed and
the points to be added in the future can be characterized by
a few parameters.
A similar property does not hold in three dimensions,
and therefore one cannot extend our
 dynamic-programming approach to higher dimensions.
For the same reason, the Euclidean distance version
cannot be solved by our method, since, unlike in the Manhattan
case, the effect of the $U_w$ points on the upper
side of the current rectangle on the distance to points that are
inserted in the future cannot be summarized in the parameters
$\Delta^{\mathrm{UR}}$
and
$\Delta^{\mathrm{UL}}$.
Moreover, in higher dimensions, there is no known solution for the
continuous case; the corresponding calculus-of-variations problem
will be harder to solve than in two dimensions.

\paragraph{Acknowledgements}
We thank the reviewers for their careful reading and their helpful comments.

\bibliographystyle{elsarticle-harv}
\bibliography{lib}







\appendix

\begin{figure}
  \centering
{\small
\footnotesize
\begin{verbatim}
n_target = 40 # run up to this value of n
\end{verbatim}
\vspace{-1,6\baselineskip}
\begin{verbatim}
cost_array = {} # initialize data for "array"
from math import sqrt
width_limit = int(2*sqrt(n_target)+5)
for w in range(0,width_limit+2):
 for cc in range(width_limit,0,-1):
  cost_array[w,cc]={}
MAX = n_target**3 # "infinity", trivial upper bound on cost
opt = (n_target+1)*[MAX] # initialize array for optimal values

cost_array[0,width_limit][0,0,0,0,0,0]=0 # starting "town" with no columns
for w in range(0,width_limit+1):
 for cc in range(width_limit,0,-1):
  for (D_up_right, D_down_right, D_up_left, D_down_left,
       n_up, n_down), cost in cost_array[w,cc].items():
\end{verbatim}
\vspace{-1,7\baselineskip}
\begin{verbatim}
    D_up_left   += n_up # add 1 horizontal unit to all left-distances
    D_down_left += n_down
\end{verbatim}
\vspace{-1,7\baselineskip}
\begin{verbatim}
    for c in range(cc,-1,-1): # decrease size c of new column one by one
      n = n_up+n_down + (w+1)*c # (w = previous value of w)
      if n <= n_target: # total number of occupied points so far
        new_cost = cost + ( (D_up_left + D_down_left) * c +
                            (n_up + n_down) * c*(c-1)/2 +
                            (c+1)*c*(c-1)/6 * (2*w+1) +
                            c*c * w*(w+1)/2 )
        if c==0: # a completed town
          opt[n] = min( new_cost, opt[n] )
        else: # store cost of newly constructed partial town
          ind = (D_up_left, D_down_left, D_up_right, D_down_right,
                 n_up, n_down) # exchange left and right when storing
          cost_array[w+1, c][ind] = min ( new_cost,
          cost_array[w+1, c].get(ind, MAX) )
      # decrease c by 1:
      if (c%2)==1: # remove an element from the top of the leftmost column
          n_up += w
          D_up_left  += n_up + w*(w+1)/2
          D_up_right += n_up + w*(w-1)/2
      else: # remove from the bottom
          n_down += w
          D_down_left  += n_down + w*(w+1)/2
          D_down_right += n_down + w*(w-1)/2
\end{verbatim}
\vspace{-1,6\baselineskip}
\begin{verbatim}
for n in range(1,n_target+1): print n, opt[n]
\end{verbatim}
}

  \caption{\textsc{Python} program for computing optimal $n$-towns}
  \label{fig:program}
\end{figure}

\section{Program for Computing Optimal Towns}
\label{code}
Figure~\ref{fig:program} shows a short program in the programming language \textsc{Python} that
implements our algorithm.
The program calculates and prints the costs of optimal $n$-towns for
all values of $n$
up to the specified limit $n=\verb:n_target:$.
  Instead of an 8-dimensional array, the
costs are stored 
 as a \emph{dictionary} in the variable
{\tt
\verb:cost_array:[\verb:w:,\verb:cc:][\verb:D_up_right:, \verb:D_down_right:, \verb:D_up_left:, \verb:D_down_left:,
       \verb:n_up:, \verb:n_down:]}.
This makes the program a lot simpler, since we don't have to worry about
allocating arrays with explicit limits, and
 incurs little overhead, since internally, \textsc{Python} dictionaries
are implemented as hash tables, providing constant expected access time.

Instead of adding rows alternately on the left and on the right, the
program always adds a new row on the left side, but (implicitly) reflects
the town about the $y$-axis when storing a cost value, achieving the
same effect.

The main loop of the program does not use the recursion in the
form~\eqref{eq:recursion}, which calculates the optimum cost of a
configuration from all partial solution that lead to it when a column
is added.  Instead, it makes a ``forward'' transfer, generating
all successor configurations of a given configuration.  This has the
advantage that certain ``impossible'' parameter sets are automatically
excluded.  For example, in the running time analysis for
Theorem~\ref{running}, we argued that parameter pairs $U,D$ with
$\abs{U-D}>C_{\max}$ need not be considered.
(The parameters $U$ and $D$ correspond to the variables \verb:n_up: and \verb:n_down:.)
 Since the program only
adds columns which are (approximately) balanced about the $x$-axis,
it will never generate solutions with such parameters.

The program can be adapted for computing optimal $n$-block
\emph{cities}. Then the additional cost $\Lambda$
{from}~\eqref{eq:lambda}
 between blocks in the same row or column must be
taken into account.  One simply has to extend the last line in the
computation of \verb:new_cost::
{\small
\begin{verbatim}
             c*c * w*(w+1)/2 )
\end{verbatim}
}\noindent to
{\small
\begin{verbatim}
             c*c * w*(w+1)/2 ) * 6  + c*c + (cc-c)*w*w
\end{verbatim}
\noindent}The resulting cost is scaled by a factor of~6, but the
end result is then always even, so we could divide it by~2
(cf.~Table~\ref{table:smalltowns}: all values $c_{\textrm{city}}(n)$ are multiples
of~$1/3$).

For $n=40$, the program takes a few seconds, but for $n=80$ it takes
hours.
For larger $n$ the space becomes a more severe bottleneck than
the running time; thus it is important to release
 storage when it is no longer needed, for
example by resetting {\tt \verb:cost_array:[\verb:w:,\verb:cc:]=\verb:{}:}
after each outer loop.
There are several possibilities to speed up the program.
The cost of some approximately circular solution can be taken as an
initial upper bound.  With this upper bound, one can then derive a
stronger bound \verb:width_limit: on the maximum height and width by
{ad-hoc} methods.  During the calculation, one can also prune cost
values that are so large that they cannot possibly lead to a better
solution.  The given program computes only the optimum cost. We have
extended it to also remember the optimal solutions. This program has
133 lines and was used to produce the data of
Table~\ref{table:smalltowns}.

\end{document}